\DeclareMathOperator*{\argmin}{arg\,min}
\DeclareMathOperator*{\argmax}{arg\,max}
\newcommand{\ls}[1]  %% 1 in brackets means \ls takes 1 argument
   {\dimen0=\fontdimen6\the=#1\dimen0
    \advance\lineskip.5\fontdimen5\the\lineskip-\dimen0
    \lineskiplimit=.9\lineskip
    \baselineskip=\lineskip
    \advance\baselineskip\dimen0
    \normallineskip\lineskip
    \normallineskiplimit\lineskiplimit
    \normalbaselineskip\baselineskip
    \ignorespaces
   }
\begin{document}
\newtheorem{theorem}{Theorem}
\newtheorem{coro}{Corollary}[theorem]
\newtheorem{definition}{Definition}
\newtheorem{fact}{Fact}
\newtheorem{lemma}{Lemma}
\newtheorem{propo}{Proposition}
%\newtheorem{protocol}{Protocol}
%\newtheorem{remark}{Remark}
%\newtheorem{test}{Test}
%\newtheorem{transform}{Conversion Algorithm}

%\title{User Association in a Massive MIMO System with Small Cells}
%\title{User Association in Massive MIMO Heterogeneous Networks}
\title{User Association in Massive MIMO HetNets}

%\author{\IEEEauthorblockN{Yi Xu, Guosen Yue and Shiwen Mao}
%}

\author{Yi~Xu~\IEEEmembership{Student~Member,~IEEE}~and~Shiwen~Mao,~\IEEEmembership{Senior~Member,~IEEE}%
%\thanks{Copyright\copyright 2012 IEEE. Personal use of this material is permitted. However, permission to use this material for any other purposes must be obtained from the IEEE by sending a request to pubs-permissions@ieee.org.}
%\thanks{This work is supported in part by the National Science Foundation (NSF) under Grants CNS-0953513, CNS-1247955, and CNS-1320664.} 
\thanks{Y. Xu and S. Mao are with the Department of Electrical and Computer Engineering, Auburn University, Auburn, AL 36849-5201. Email: YZX0010@tigermail.auburn.edu, smao@ieee.org.}
\thanks{Shiwen Mao is the corresponding author: smao@ieee.org, Tel: (334)844-1845, Fax: (334)844-1809.}
%\thanks{Color versions of one or more of the figures in this paper are available online at http://ieeexplore.ieee.org.}
%\thanks{Digital Object Identifier XXXX/YYYYYY}
}

\maketitle

%%%\markboth{IEEE SYSTEMS JOURNAL (Under Review)}%
%{XU and MAO: USER ASSOCIATION IN A MASSIVE MIMO SYSTEM WITH SMALL CELLS}
%{XU and MAO: USER ASSOCIATION IN MASSIVE MIMO HETEROGENEOUS NETWORKS}
%%%{XU and MAO: USER ASSOCIATION IN MASSIVE MIMO HETNETS}

\maketitle

\ls{0.88}

\begin{abstract}
Massive MIMO and small cell are both recognized as the key technologies for the future 5G wireless systems.  In this paper, we investigate the problem of user association in a heterogeneous network (HetNet) with massive MIMO and small cells, where the macro base station (BS) is equipped with a massive MIMO and the picocell BS's are equipped with regular MIMOs. We first develop centralized user association algorithms with proven optimality, considering various objectives such as rate maximization, proportional fairness, and joint user association and resource allocation. We then model the massive MIMO HetNet as a repeated game, which leads to distributed user association algorithms with proven convergence to the Nash Equilibrium (NE). We demonstrate the efficacy of these optimal schemes by comparison with several greedy algorithms through simulations.
%We develop both centralized and distributed optimal algorithms. 
%Massive MIMO and small cell are both recognized as the key technologies for the future 5G wireless sytems. 
%techniques have drawn enormous attention recently.
%In this paper, we investigate the problem of user association in a heterogeneous network (HetNet) with massive MIMO and small cells, and develop both centralized and distributed optimal algorithms. 
%To investigate the user association problem is of great importance for system throughput enhancement.
%From the centralized and distributed perspective, we investigate how to obtain the optimal user association scheme. In particular, on one hand, we consider centralized user association for rate maximization, rate maximization with proportional fairness and $\log$ rate maximization with joint resource allocation design. On the other hand, we consider distributed user association when service provider sets the price or users bid for connection. For all the problems considered, we obtain the optimal solution. 
%We demonstrate the efficacy of these optimal schemes by comparing them to several greedy heuristic schemes through simulations.
\end{abstract}

\begin{keywords}
Massive MIMO; small cells; heterogeneous networks (HetNet); user association; unimodularity; game theory.
\end{keywords}

\pagestyle{empty}\thispagestyle{empty}
%\pagestyle{plain}\thispagestyle{plain}
%\pagestyle{headings}\thispagestyle{headings}

%\begin{keywords}
%Interference Alignment.
%\end{keywords}

%-----------------------------------------------------------
\section{Introduction} \label{sec:intro}
%-----------------------------------------------------------

Over the past two decades, Multiple Input Multiple Output (MIMO) has evolved from a pure theory to a practical technology, and has greatly enhanced the wireless system capacity by offering many degrees of freedom (DoF) for wireless transmissions.
However, due to the so-called ``smartphone'' revolution, mobile users are demanding increasingly higher data rates for rich multimedia applications. Existing and future wireless networks are facing the grand challenge of a 1000-time increase in mobile data in the near future~\cite{Qualcomm}. 
There have been tremendous efforts made aiming to cater for this demand. For example, based on MIMO and OFDM, LTE-Advanced targets at a peak rate of $1$ Gbps, but the average rate is still less than $100$ Mbps. 
In the foreseeable future, 
%with more and more video related data traffic \cite{cisco2013}, 
such rates can hardly be satisfactory for data-hungry wireless users. 

To boost wireless capacity, two technologies have gained most attention from both industry and academia. 
The first one is massive MIMO (a.k.a., large-scale MIMO, full-dimension MIMO, or hyper MIMO)~\cite{Larsson13,Rusek13}. The idea 
%of massive MIMO 
is to equip a base station (BS) with hundreds, thousands, or even tens of thousands of antennas, hereby providing an unprecedented level of DoF for mobile users. The massive MIMO concept has been successfully demonstrated in recent works~\cite{LZhong12,LZhong13}.
The second technology is small cell. 
A great benefit of deploying small cells is that the distance of the user-BS link can be effectively reduced, leading to reduced transmit power, higher data rate, enhanced coverage, and better spatial reuse of spectrum. Both massive MIMO and small cells are recognized as key technologies of the future $5$G wireless systems~\cite{andrew5g}. 

In this paper, we consider a heterogeneous network (HetNet) with massive MIMO and small cells, 
%in particular, 
where the macrocell BS (MBS) is equipped with a massive MIMO and the picocell BS's (PBS) are equipped with regular MIMOs.
%
%The trend of merging massive MIMO and small cell technologies to form HetNet (heterogeneous networks) has become more and more bright and clear. Many researchers and standardization organizations have considered them as the core technologies for $5$G. \cite{andrew5g} views massive MIMO and small cell as two of the ``big three'' technologies for $5$G wireless communication system.
%
%Given the envisioned benefits of massive MIMO and small cells, to combine these two technologies, the first question would be how to associate the users and the BSs, so that the system throughput or user experience can be ultimately enhanced.
To fully harvest the benefits promised by these two technologies in an integrated HetNet system, it is critical to investigate the user association problem, i.e., how to assign active users to the BS's such that the system-wide capacity can be maximized and users' experience can be enhanced.  

There are already several recent works pushing forward in this direction.
In~\cite{Giuseppe13,Giuseppe132,yixuicc,yiaccess}, the authors consider the problem of user association in massive MIMO systems operated in the frequency-division duplexing (FDD) mode. These papers are focused on a macrocell without small cells. In~\cite{Bethanabhotla}, user association in time-division duplexing (TDD) massive MIMO system is addressed, where factional user association is allowed. 
%However, it is worth pointing out that fractional user association is allowed in \cite{Bethanabhotla}.
Bayat et al. in~\cite{Bayat2014} model the problem of user association in a femtocell HetNet as a dynamic matching game and derive the optimal user association. However, massive MIMO is not considered in the system model.
In~\cite{Gupta14}, the authors investigate the problem of user association with conventional MIMO BS's and propose a simple bias based selection criterion to approximate more complex selection rules.
Bj$\ddot{o}$rnson, et al. in~\cite{Debbahmmsm} consider the problem of improving the energy efficiency without sacrificing the quality of service (QoS) of users in a massive MIMO and small cell HetNet.

%Different from these works, 
Motivated by these interesting works, we consider the user association problem in a TDD massive MIMO HetNet in this paper, taking into consideration of the practical constraints, such as the limited load capacity at each BS, while without allowing fractional user association. 
The main goal is to maximize the system capacity while enhancing user experience. 

More specifically, this paper contains two parts: (i) centralized user association and (ii) distributed user association.
For centralized user association, we investigate the problems of rate maximization, rate maximization with proportional fairness, and joint resource allocation and user association. We prove the unimodularity of our formulated problem and 
%leverage the unimodularity to obtain 
develop optimal user association algorithms to the problems of rate maximization and rate maximization with proportional fairness. We then propose a series of primal decomposition and dual decomposition algorithms to solve the problem of joint resource allocation and user association and prove the optimality of the proposed scheme.
For distributed user association, we model the behavior and interaction between the service provider, who owns the BS's, and users as repeated games. We consider two types of operations: (i) the service provider sets the price and the users decide which BS to connect to, and (ii) the users bid for the opportunity of connection. We prove that in both cases the the proposed algorithms converge to the respective Nash Equilibrium (NE). 
%repeated games converge. The proposed algorithms to reach the Nash Equilibrium (NE) of each game.

%The reminder of this paper is organized as follows.
In the reminder of this paper, Section~\ref{sec:sysmo} introduces the system model and preliminaries.
Optimal centralized and distributed user association schemes are presented in Sections~\ref{sec:central} and~\ref{sec:dist}, respectively.
Section~\ref{sec:sim} presents the simulation study and Section~\ref{sec:con} concludes this paper. 
%Conclusion is drawn in Section \ref{sec:con}.
Throughout this paper, we use a boldface upper (lower) case symbol to denote a matrix (vector), and a normal symbol to denote a scalar. $(\cdot)^H$ denotes the Hermitian of a matrix.

%----------------------------------------------
\begin{comment}

\begin{table} [!t]
\begin{center}
\caption{Notation}
\label{tab:nota}
\begin{tabular}{l|l}
\toprule
\textbf{Symbol} & \textbf{Definition}\\
\midrule
$\mathbf{V}$ & Precoding matrix \\
$x_{k_g}$ & Connection indicator \\
$\mathbf{H}$ & Channel matrix \\
$\mathbf{d}$ & Data vector \\
$\mathbf{n}$ & Noise vector \\
$R_{k_j}$ & Rate of user $k$ connecting to BS $j$\\
$L_j$ & Size of user to serve by BS $j$ \\
$M_j$ & Antenna number of BS $j$ \\
$P_j$ & Transsmission power of BS $j$ \\
$d_{j,k}$ & Distance between user $k$ and BS $j$ \\
$\mathbf{A}$ & Constraint matrix \\
$\mathbf{S}_i$ & Square submatrix of size $i$ of $\mathbf{A}$\\
$\zeta_i$ & Number of $1$s in the $i$-th column of $\mathbf{A}$ \\
$\varphi$ & Power normalization factor\\
$\eta_k$ & Sum rate for user $k$ \\
$\eta_{k_j}$ & Achievable rate for user $k$ regarding BS $j$\\
$\mathbf{x}$ & Vector for optimization variable\\
$\mathbf{c}$ & Vector for optimization objective\\
$\mathbf{A}$ & Constraint matrix \\
$\mathbf{V}$ & Users' evaluation matrix \\
$\mathcal{G}_j$ & User set with coverage of BS $j$\\
\bottomrule
\end{tabular}
\end{center}
%\vspace{-0.15in}
\end{table}

\end{comment}
%----------------------------------------------------------

%-----------------------------------------------------------
\section{System Model and Preliminaries} \label{sec:sysmo}
%-----------------------------------------------------------

%%Challenges:
%%\begin{itemize}
	%%\item how do users know rate
	%%\item BS how to know which users
	%%\item Massive MIMO coupling
	%%\item Load of MBS and PBS $\leq$ number of antenna
	%%\item Scheduling problem: if a user wants to connect, however, the BS does not schedule it.
	%%\item How to make decisions based on local information
%%\end{itemize}
%%
%%Questions:
%%\begin{itemize}
	%%\item Relationship between the number of antenna and number of user?
%%\end{itemize}

%\begin{figure} [!t] %[thb]
%\center{\includegraphics[width=3.2in, height=2.0in]{ scenario.eps}}
%\caption{Illustration of a Massive MIMO system with small cells.}
%\label{fig:sce}
%%\vspace{-0.15in}
%\end{figure}

The system considered in this paper includes $K$ users and $J$ BS's, including an MBS
%a macrocell BS (MBS) 
with a massive MIMO and $(J-1)$ 
%picocell BS's (PBS), 
PBS's, each equipped with a conventional MIMO. 
%as illustrated in Fig.~\ref{fig:sce}. 
%As stated in~\cite{Damnjanovic}, picocells can benefit from inter-cell interference coordination (ICIC).
The channel model is
%\begin{eqnarray}
	$h_{j,k,n}=g_{j,k,n} l_{j,k}$,
%\label{eqn:mmhn99}
%\end{eqnarray}
where $h_{j,k,n}$ is the channel of antenna $n$ at BS $j$ to user $k$, $g_{j,k,n}$ represents the small scale fading coefficient between antenna $n$ of BS $j$ and user $k$, and $l_{j,k}$ stands for the large scale fading coefficient between BS $j$ and user $k$~\cite{lulu2014}. Concatenating all the channel coefficients from all the antennas of BS $j$, we obtain the channel vector $\mathbf{h}_{j,k}$, 
%Thus $\mathbf{h}_{j,k}$ is the channel vector from the $j$-th BS to user $k$.
%Putting the channels from all users along the column, we denote $\mathbf{H}_j = \left[ \mathbf{h}_{j,1}, \mathbf{h}_{j,2}, \cdots, \mathbf{h}_{j,k} \right]$ as the channel coefficient matrix for signals transmitted from the $j$-th BS. 
as well as the channel coefficient matrix for signals transmitted from BS $j$ as $\mathbf{H}_j = \left[ \mathbf{h}_{j,1}, \mathbf{h}_{j,2}, \cdots, \mathbf{h}_{j,k} \right]$ . 

Let $\mathbf{y}_j$ denote the signals received by the users connecting to BS $j$, $\mathbf{W}_j$ the precoding matrix of BS $j$, and $\mathbf{d}_j$ the data sent from BS $j$. We have
\begin{eqnarray}
	\mathbf{y}_j = \mathbf{H}_j\mathbf{W}_j\mathbf{d}_j + \mathbf{n}_j,
\label{eqn:mmhn89}
\end{eqnarray}
where $\mathbf{n}_j$ is the zero mean circulant symmetric complex Gaussian noise vector. 

Each active user has the options to connect to either the MBS or a PBS. For a user $k$, define user association index variable $x_{k_j}$ as
\begin{equation}
  x_{k_j}=\begin{cases}
    1, & \text{if user $k$ is connected to BS $j$}.\\
    0, & \text{otherwise}.  
  \end{cases}
\label{eqn:mmhn95}
\end{equation}
%who may or may not connect to the massive MIMO base station, denoting  
Let its achievable rate if connected to BS $j$ be $R_{k_j}$, $\eta_{k_j}=x_{k_j} R_{k_j}$, 
%with respect to BS $j$ as $\eta_{k_j}$, 
and its actual data rate be $\eta_k$. %we readily have:
%\begin{eqnarray}
%	\eta_{k_j} = x_{k_j} R_{k_j},
%\label{eqn:mmhn98n}
%\end{eqnarray}
%where
%Denote $\eta_k$ as the sum rate for user $k$, then we have:
We have
\begin{eqnarray}  \label{eqn:mmhn98nn}
	\eta_k = \sum_j \eta_{k_j} = \sum_j x_{k_j} R_{k_j}.
\end{eqnarray}

For users connecting to a massive MIMO BS $j$ (i.e., the MBS), their achievable rate can be approximated with the following deterministic rate~\cite{Bethanabhotla}.
\begin{eqnarray}
	R_{k_j}= \log \left( 1 + \frac{M_j-L_j+1}{L_j}\frac{P_j  l_{j,k}}{1+\sum_{j' \neq j}{P_{j'}
	         l_{j',k}}} \right),
\label{eqn:mmhn98}
\end{eqnarray}
where $M_j$ is the number of antennas at the BS, $L_j$ is the prefixed load parameter of the BS indicating how many users it could serve, and $P_j$ is transmit power from the MBS. Note that there is no small scale fading factor in~(\ref{eqn:mmhn98}). This approximation has been proven to be accurate~\cite{Bethanabhotla}.

%As stated in~\cite{Damnjanovic}, picocells can benefit from inter-cell interference coordination (ICIC).
For a PBS with a conventional MIMO, we assume that the inter-cell interference is negligible  among the picocells, due to the small transmission powers and effective inter-cell interference coordination (ICIC)~\cite{Damnjanovic}. 
%of these small cell BSs are typically low. 
The achievable rate of user $k$ connecting to PBS $j$ can be represented as follows.
%\textbf{Conventional MU-MIMO}:
\begin{eqnarray}
	\widetilde{R}_{k_j} = \log \left(1+ \frac{P_j \left|\mathbf{h}^H_{j,k}\mathbf{w}_{j,k}\right|^2}{1+\sum_{k' \neq k}P_j\left|\mathbf{h}^H_{j,k}\mathbf{w}_{j,k'}\right|^2} \right),
\label{eqn:mmhn97}
\end{eqnarray}
where $\mathbf{w}_{j,k}$ is the $k$-th column of BS $j$'s precoding matrix $\mathbf{W}_j$. 
There are many precoding designs for conventional MIMO BS's, such as
%\begin{align}
%	\left\{ \begin{array}{l}
%	  \mathbf{W}_j = \frac{1}{\sqrt{\varphi}} \mathbf{H}_j^H \\ %\label{eqn:mmhn97w1} \\
%	  \mathbf{W}_j = \frac{1}{\sqrt{\varphi}} \mathbf{H}_j^H(\mathbf{H}_j^T\mathbf{H}_j^H)^{-1} \\ %\label{eqn:mmhn97w2} \\
%	  \mathbf{W}_j = \frac{1}{\sqrt{\varphi}} \mathbf{H}_j^H(\mathbf{H}_j^T\mathbf{H}_j^H + \delta \mathbf{I})^{-1}. %\label{eqn:mmhn97w3}
%	        \end{array} \right. 
%\end{align}
%The above equations are the precoding matrices for
matched filter (MF) precoding, zero forcing (ZF) precoding, and regularized zero forcing (RZF) precoding~\cite{lulu2014}. 
%, respectively. Note that here $\varphi$ is a power normalization factor.
%(\ref{eqn:mmhn97w1}) is Matched Filter(MF), (\ref{eqn:mmhn97w2}) is ZF, and (\ref{eqn:mmhn97w3}) is RZF.
%
Without loss of generality, we adopt MF precoding in this paper with $\mathbf{W}_j = \frac{1}{\sqrt{\varphi}} \mathbf{H}_j^H$, where $\varphi$ is a power normalization factor. The signal received by all the users connecting to PBS $j$ can be rewritten as follows.
\begin{equation}
\label{eqn:mmhn88}
\mathbf{y}_j = 
\begin{pmatrix}
h^H_{j,1}h_{j,1}d_1+h^H_{j,1}h_{j,2}d_2+\cdots+h^H_{j,1}h_{j,k}d_k \\
h^H_{j,2}h_{j,1}d_1+h^H_{j,2}h_{j,2}d_2+\cdots+h^H_{j,2}h_{j,k}d_k \\
\cdots \\
h^H_{j,k}h_{j,1}d_1+h^H_{j,k}h_{j,2}d_2+\cdots+h^H_{j,k}h_{j,k}d_k
\end{pmatrix}.
\end{equation}
Thus, the achievable rate for user $k$ regarding to PBS $j$ can be obtained as follows.
\begin{eqnarray}
\eta_{k_j} = \log\left(1 + \frac{P_j \left| x_{k_j} h^H_{j,k}h_{j,k}\right|^2}{1+\sum_{k' \neq k}P_j \left| x_{k'_j}h^H_{j,k}h_{j,k'}\right|^2}\right).
\label{eqn:mmhn87}
\end{eqnarray}

%-----------------------------------------------------------
\section{Centralized User Association} \label{sec:central}
%-----------------------------------------------------------
%%%\reminder{prove it's NP-hard}
%%%
%%%\reminder{Hungarian algorithm could solve when $k=j$ and $L_j =1$}
%%%
%%%\reminder{The constraint matrix is sparse. How to make use of this?}
%%%
%%%%s.t.: $f(\mathbf{V}_M,\mathbf{V}_P)$
%%%
%%%Fix the size of user to be served?
%%%
%%%
%%%How to approximate user SINR?
%%%\begin{itemize}
	%%%\item For MBS, user SINR is irrelevant to small scale fading, just related to large scale fading.
	%%%\item For PBS, user SINR is related to both large and small scale fading.
%%%\end{itemize}

%%%\textbf{Utility function}: (provide detailed discussion about these criteria.)
In this section, we consider the problem of centralized user association. We assume that the BS's have 
%obtained 
all the channel state information (CSI) via uplink training. We adopt the following utility function for each user $k$ with achievable rate $\eta_k$. 
%based on its achievable rate as
\begin{equation}
  \mathcal{U}(\eta_k)=\begin{cases}
	\eta_k^{1-\alpha} / (1-\alpha), & \text{if } \alpha>0,\alpha \neq 1 \\
	\eta_k, & \text{if } \alpha = 0 \\
    \log(\eta_k), & \text{if } \alpha = 1.
  \end{cases}
\label{eqn:mmhn96}
\end{equation}
%The implications behind~(\ref{eqn:mmhn96}) are as follows.
%\begin{itemize}
%	\item $\alpha = 0$ yields the maximization of the sum rate (no fairness).
%	\item $\alpha \rightarrow \infty$ yields the maximization of the worst-case rate (max-min fairness).
%	\item $\alpha = 1$ yields the maximization of the geometric mean rate (proportional fairness).
%\end{itemize}
When $\alpha = 0$, maximizing $\mathcal{U}(\cdot)$ yields the maximization of the sum rate (but no fairness); when $\alpha \rightarrow \infty$, it leads to the maximization of the worst-case rate (i.e, max-min fairness); when $\alpha = 1$, it yields the maximization of the geometric mean rate (i.e., proportional fairness).

Our goal is to maximize the system utility by configuring the user-BS association. Typically, we consider the cases when $\alpha = 0$ and $\alpha = 1$. In the case of $\alpha=1$, we define $\mathcal{U}(0)=0$.
%$\mathcal{U}(\eta_k)=0$, if $\eta_k = 0$.

%-----------------------------------------------------------
\subsection{Maximizing Sum-rate} \label{subsec:cms}
%-----------------------------------------------------------

We firstly investigate the problem of maximizing the system sum rate, i.e., $\alpha = 0$ in~(\ref{eqn:mmhn96}) and $\mathcal{U}(\eta_k)= \eta_k$. 
The problem 
%is to maximize $\sum_k \mathcal{U}_k(z)$ given the system configuration and user distribution, which is mathematically 
can be formulated as follows.
\begin{align}
\mbox{\textbf{P1-1:}} &\;\; \max_{\{x_{k_j}\}} \sum_{k=1}^K{\eta_{k}}  \label{eqn:mmhn94} \\ 
\mbox{s.t.} &\;\; \sum_k x_{k_j} \leq L_j \leq M_j,~j=1,2,\cdots,J  \nonumber \\ 
            &\;\; \sum_j x_{k_j} \leq 1,~k=1,2,\cdots,K \nonumber \\ 
            &\;\; \mbox{Constraints~(\ref{eqn:mmhn95}),~(\ref{eqn:mmhn98nn}),
            (\ref{eqn:mmhn98}),~(\ref{eqn:mmhn87})}. \nonumber
%            &\;\; \mbox{Constraints (\ref{eqn:mmhn98}),%~(\ref{eqn:mmhn98n}),
%            ~(\ref{eqn:mmhn95}),~(\ref{eqn:mmhn98nn}),~(\ref{eqn:mmhn87})}. \nonumber 
\end{align}

Note that the second constraint requires the number of users connecting to a BS to be no more than its prefixed load, which should in turn be no more than the number of antennas it has, since theoretically BS $j$ can provide at most $M_j$ degrees of freedom (DoF). 
%But constraint $L_j \leq M_j$ should be satisfied by when configuring the system parameters. So we could drop this constraint out of the optimization problem.
Assuming the $L_j$'s are already chosen to satisfy $L_j \leq M_j$, we drop this constraint in the remainder of this paper. The third constraint simply claims that each user can connect to at most one BS at a time.

A key observation is that~(\ref{eqn:mmhn87}) can be rewritten as
\begin{eqnarray}
	\eta_{k_j} = x_{k_j} \log\left(1 + \frac{P_j \left|  h^H_{j,k}h_{j,k}\right|^2}{1+\sum_{k' \neq k}P_j 	\left| x_{k'_j}h^H_{j,k}h_{j,k'}\right|^2}\right). 
\label{eqn:mmhn87n}
\end{eqnarray}
Thus the $\widetilde{R}_{k_j}$ in (\ref{eqn:mmhn97}) can be redefined as
\begin{eqnarray}
	\widetilde{R}_{k_j} = \log\left(1 + \frac{P_j \left|  h^H_{j,k}h_{j,k}\right|^2}{1+\sum_{k' \neq k
	}P_j \left| x_{k'_j}h^H_{j,k}h_{j,k'}\right|^2}\right).
\label{eqn:mmhn79o}
\end{eqnarray}

In~(\ref{eqn:mmhn79o}), it can be seen that $\widetilde{R}_{k_j}$ depends on other users' choices $x_{k'_j}$, for all $k \neq k'$, as well. To make the problem tractable, we adopt the worst-case approximation by assuming the users within the coverage of BS $j$ (denoted as $\mathcal{G}_j$) all connect to BS $j$ with perfect channels. This way,~(\ref{eqn:mmhn79o}) can be approximated as
\begin{eqnarray}
	\widetilde{R}_{k_j} = \log\left(1 + \frac{P_j \left|  h^H_{j,k}h_{j,k}\right|^2}{1+ (\left| \mathcal{G}_j \right| -1)P_j }\right), 
\label{eqn:mmhn79}
\end{eqnarray}
where $\left| \cdot \right|$ for a set stands for the cardinality of the set.

Define auxiliary variables $c_{k_j}$ as follows.
\begin{eqnarray}
  c_{k_j}=\begin{cases}
    R_{k_j} \text{ in (\ref{eqn:mmhn98})}, & \text{if BS $j$ is the MBS};\\
    \widetilde{R}_{k_j} \text{ in (\ref{eqn:mmhn79})}, & \text{if BS $j$ is a PBS} .
  \end{cases}
\label{eqn:mmhn80}
\end{eqnarray}
The sum rate maximization problem can be reformulated as
\begin{align}
\mbox{\textbf{P1-2:}} &\;\; \max_{\{x_{k_j}\}} \sum_{k=1}^K\sum_{j=1}^J {x_{k_j}c_{k_j}} \label{eqn:mmhn78} \\
\mbox{s.t.} &\;\; \sum_k x_{k_j} \leq L_j,~j=1,2,\cdots,J \nonumber \\
            &\;\; \sum_j x_{k_j} \leq 1,~k=1,2,\cdots,K \nonumber \\
            &\;\; \mbox{Constraints (\ref{eqn:mmhn95}),~(\ref{eqn:mmhn80})}. \nonumber
\end{align}

%For the above optimization problem, since its 
Since the variables $x_{k_j}$'s are binary, problem {\bf P1-2}  falls into the category of \textit{Multiple Knapsack Problems}, which is one of Karp's $21$ NP-complete problems~\cite{Karp72}. 
%At this point, one may try to use a greedy algorithm to obtain sub-optimal solutions. However, by taking advantage of the coefficients of the constraints, we could obtain the optimal solution.
Although a greedy algorithm could be developed to compute sub-optimal solutions, we show that problem {\bf P1-2} can actually be optimally solved by taking advantage of its special structure. 

Let $\mathbf{X}$ be a matrix with entries $x_{k_j}$, $k = 1, 2, \cdots, K$, $j = 1, 2, \cdots, J$. We could convert $\mathbf{X}$ to a vector $\mathbf{x}$ by concatenating the rows of $\mathbf{X}$ and taking a transpose 
%For instance, the following matrix 
%\begin{eqnarray}
%\label{eqn:mmhn77}
%\mathbf{X} &=& 
%	\begin{pmatrix}
%		x_{1_1}& x_{2_1} & \cdots &x_{K_1} \\
%		x_{1_2}& x_{2_2} & \cdots &x_{K_2} \\
%		\cdots & \cdots  & \cdots &\cdots  \\
%		x_{1_J}& x_{2_J} & \cdots &x_{K_J}
%	\end{pmatrix}
%	.
%\end{eqnarray}
%is reshaped 
as $\mathbf{x}=\left[x_{1_1}~x_{2_1}~\cdots~x_{K_1}~\cdots~x_{1_J}~\cdots~x_{K_J} \right]^T$, and 
simplify the notation as $\mathbf{x}=\left[x_1~x_2~\cdots~x_{K_J} \right]^T$. We then apply the same conversion to the matrix comprising $c_{k_j}$ and obtain vector $\mathbf{c}$. 
%With these transformations, we rewrite the optimization problem as:
Problem \textbf{P1-2} can be rewritten as
\begin{align}
\mbox{\textbf{P1-3:}} &\;\; \max_{\mathbf{x}} \mathbf{c}^T \mathbf{x} \label{eqn:mmhn76} \\
\mbox{s.t.} &\;\; \sum_{k=1}^K x_{(j-1)K+k} \leq L_j,~j=1,2,\cdots,J \nonumber \\
            &\;\; \sum_{j=1}^J x_{k+(j-1)K} \leq 1,~k=1,2,\cdots,K \nonumber \\
            &\;\; \mbox{Constraints (\ref{eqn:mmhn95}),~(\ref{eqn:mmhn80})}. \nonumber
\end{align}

Ignoring constraints~(\ref{eqn:mmhn95}) and~(\ref{eqn:mmhn80}), define $\mathbf{A}$ as the constraint matrix of problem \textbf{P1-3}, with entries being the coefficients of the first and second constraints. 
We next introduce an important definition and derive a key lemma.

\begin{definition} 
A matrix $\mathbf{A}$ is called totally unimodular if the determinant of every square submatrix of $\mathbf{A}$ is either $0$, $+1$ or $-1$~\cite{Schrijver}.
\end{definition}

\begin{lemma}
The constraint matrix $\mathbf{A}$ of problem \textbf{P1-3} is totally unimodular.
\label{lem1}
\end{lemma}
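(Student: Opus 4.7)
The plan is to exploit the bipartite-incidence structure of $\mathbf{A}$ and apply the classical induction on submatrix size. First I would spell out the structure of $\mathbf{A}$. Ignoring~(\ref{eqn:mmhn95}) and~(\ref{eqn:mmhn80}), the constraint matrix has $J+K$ rows (the first $J$ coming from the load constraints $\sum_k x_{(j-1)K+k}\le L_j$ and the last $K$ coming from the assignment constraints $\sum_j x_{k+(j-1)K}\le 1$) and $KJ$ columns, one per variable $x_{k_j}$. Every column is $0/1$, and each variable $x_{k_j}$ appears in exactly one load constraint (that of BS $j$) and in exactly one assignment constraint (that of user $k$). Hence each column of $\mathbf{A}$ contains exactly two $1$'s: one among the first $J$ rows and one among the last $K$ rows.

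Next, I would partition the row index set into $I_1=\{1,\ldots,J\}$ (BS load rows) and $I_2=\{J+1,\ldots,J+K\}$ (user rows). The crucial structural property is that every column has exactly one $1$ in $I_1$ and exactly one $1$ in $I_2$. This is precisely the Heller--Tompkins sufficient condition for total unimodularity, but to match the determinantal definition given in the excerpt I would carry out the standard induction directly.

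I would then prove by induction on $n$ that every $n\times n$ submatrix $\mathbf{S}_n$ of $\mathbf{A}$ satisfies $\det(\mathbf{S}_n)\in\{-1,0,+1\}$. The base case $n=1$ is immediate since entries are $0$ or $1$. For the inductive step, I would consider three cases. (a) If some column of $\mathbf{S}_n$ is all zero, then $\det(\mathbf{S}_n)=0$. (b) If some column of $\mathbf{S}_n$ contains exactly one $1$, I would expand the determinant along that column, obtaining $\det(\mathbf{S}_n)=\pm\det(\mathbf{S}_{n-1})$ for some $(n-1)\times(n-1)$ submatrix, and invoke the inductive hypothesis. (c) If every column of $\mathbf{S}_n$ contains exactly two $1$'s, then, because every column of $\mathbf{A}$ has one $1$ in $I_1$ and one $1$ in $I_2$, the row set of $\mathbf{S}_n$ must meet both $I_1$ and $I_2$; moreover, summing the rows of $\mathbf{S}_n$ that come from $I_1$ yields the all-ones row vector (one $1$ per column), and the same holds for the rows from $I_2$. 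These two sums are equal, so the rows of $\mathbf{S}_n$ are linearly dependent and $\det(\mathbf{S}_n)=0$.

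The only delicate step is case (c), because one must verify that both $I_1\cap\mathrm{rows}(\mathbf{S}_n)$ and $I_2\cap\mathrm{rows}(\mathbf{S}_n)$ are nonempty; but this is forced by the column structure, since a column of $\mathbf{S}_n$ cannot contain two $1$'s if $\mathbf{S}_n$ only sees rows from one of $I_1,I_2$. Once that is observed, the induction closes and the definition of total unimodularity yields the lemma.
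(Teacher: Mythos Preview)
Your proof is correct and follows essentially the same approach as the paper: both exploit the bipartite row partition (BS rows versus user rows, with each column having one $1$ in each part) and argue by induction on the submatrix size via the three cases of a column containing zero, one, or two $1$'s. Your handling of case~(c) --- observing that the sum of the $I_1$-rows equals the sum of the $I_2$-rows, hence linear dependence --- is a bit cleaner than the paper's version, which reaches the same conclusion through explicit row negation and addition, but the underlying idea is identical.
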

\begin{proof}
%To prove that $\mathbf{A}$ is totally unimodular, we need to check if every square submatrix of $\mathbf{A}$ has a determinant of either $0$, $+1$, or $-1$.
Inspecting the constraints in problem \textbf{P1-3}, we find that $\mathbf{A}$ is of the following form.
\begin{eqnarray}
\label{eqn:mmhn81}
\mathbf{A} &=& 
\begin{pmatrix}
1~1 \cdots 1 & 0~0 \cdots 0 & 0~0 \cdots 0 \\
0~0 \cdots 0 & 1~1 \cdots 1 & 0~0 \cdots 0 \\
\vdots & \vdots & \vdots \\
0~0 \cdots 0 & 0~0 \cdots 0 & 1~1 \cdots 1 \\

1~0 \cdots 0 & 1~0 \cdots 0 & 1~0 \cdots 0 \\
0~1 \cdots 0 & 0~1 \cdots 0 & 0~1 \cdots 0 \\
\ddots & \ddots & \ddots \\
0~0 \cdots 1 & 0~0 \cdots 1 & 0~0 \cdots 1 \\
\end{pmatrix}.
\end{eqnarray}
We can divide $\mathbf{A}$ into blocks as follows. 
\begin{eqnarray}
\label{eqn:mmhn81n}
\mathbf{A} &=& 
\begin{pmatrix}
\mathbf{A}_1& \mathbf{A}_2 & \cdots &\mathbf{A}_J \\
\mathbf{B}_1& \mathbf{B}_2 & \cdots &\mathbf{B}_J
\end{pmatrix}
,
\end{eqnarray}
where each $\mathbf{A}_j,~j\in [1,J]$, is a submatrix of $\mathbf{A}$ of size $J \times K$; and each $\mathbf{B}_j,~j \in [1,J]$, is an identity matrix of size $K \times K$.

Let $\mathbf{S}_{n}$ denote an arbitrary square submatrix of matrix $\mathbf{A}$ of size $n$.
For any submatrix of $\mathbf{A}$ of size $n=1$, it is trivial to see that the determinant of this submatrix is either $0$ or $+1$.
%, since all of the entries of the constraint matrix are either $0$ or $+1$. 
So we only need to consider the case where the size of the square submatrix is greater than or equal to $2$. 
%i.e., $\mathbf{S}_{n}$ with $n \geq 2$.

Case 1: $\mathbf{S}_{n}$ is taken entirely from one of the submatrices $\mathbf{A}_j$ or $\mathbf{B}_j$, $j \in [1,~J]$.
We can see from the structure 
%of $\mathbf{A}_j$ 
that at least one row of $\mathbf{A}_j$ is all zero. So if the square submatrix is entirely taken from $\mathbf{A}_j$, the determinant of the submatrix is zero. Since matrix $\mathbf{B}_j$, for all $j$, is simply an identity matrix, it is straightforward that the determinant of any square submatrix of $\mathbf{B}_j$ is either $0$ or $+1$.

Case 2: $\mathbf{S}_{n}$ is not entirely taken from any one of the submatrices $\mathbf{A}_j$ or $\mathbf{B}_j$, $j \in [1, J]$.
In this case, the square submatrix must be taken from $2n$ ($n = 1, \cdots, J$) submatrices of the submatrix set $(\mathbf{A}_j \cup \mathbf{B}_j$, $j \in {1,\cdots, J})$. We next proceed with our proof by applying induction method.

For the base case $n=1$, the square submatrix to be examined is of size $2$. Since the entries can only be $0$ or $+1$, the determinant 
%of the square submatrix 
can only be $0$, $+1$ or $-1$.

Now assuming that any square submatrix of size $(n-1)$ has determinant $0$, $+1$ or $-1$, we need to check if the same conclusion holds for any square submatrix of size $n$.

We first notice that each column of $\mathbf{A}$ has exactly two $+1$s. Moreover, exactly one of them is in $\mathbf{A}_j$, and the other in $\mathbf{B}_j$. 
Let $q^* = \argmin_q \sum_{i} \mathbf{S}_{n_{i,q}}$, where $\mathbf{S}_{n_{i,q}}$ is the $(i,q)$-th entry of $\mathbf{S}_{n}$.
That is, column $q^*$ has the minimum number of $1$s among all the columns of $\mathbf{S}_n$.

Let $\zeta_{q^*} = \min_q \sum_{i} \mathbf{S}_{n_{i,q}}$. $\zeta_{q^*}$ can only be $0$, $1$, or $2$.

If $\zeta_{q^*} = 0$, then all the entries of the $q^*$-th column of $\mathbf{S}_n$ are $0$, which results in $det(\mathbf{S}_n) = 0$, where $det$ is short for determinant.

If $\zeta_{q^*} = 1$, then we could calculate $\mbox{det}(\mathbf{S}_n)$ by expanding the $q^*$-th column and obtain $\mbox{det}(\mathbf{S}_n) = \mbox{det}(\mathbf{S}_{(n-1)})$. Since $\mbox{det}(\mathbf{S}_{(n-1)})$ is $0$, $1$ or $-1$ by our induction hypothesis, we conclude $\mbox{det}(\mathbf{S}_n)$ is $0$, $1$ or $-1$.

If $\zeta_{q^*} = 2$, we could firstly negate all the entries taken from $\mathbf{B}_j$, and then add all the rows in $\mathbf{B}_j$ to any non-zero row in $\mathbf{A}_j$. After this procedure, if that non-zero row in $\mathbf{A}_j$ is still non-zero, add that row to any other non-zero row in $\mathbf{A}_j$. Repeat this process until we get a zero row in $\mathbf{A}_j$. The reason why this process always give us a all-zero row is that we have equal number of $+1$s in $\mathbf{A}_j$ and $\mathbf{B}_j$. Since any basic row operation does not change the determinant and we finally get a all-zero row, we have $\mbox{det}(\mathbf{S}_n) = 0$. That completes our induction. 
%steps.
%To sum up, the determinant of any square submatrix is either $0$, $+1$ or $-1$.
\end{proof}

\begin{fact}
For a linear programming problem, if its constraint matrix satisfies totally unimodularity, then its has all integral vertex solutions~\cite{Schrijver}.
\label{lem2}
\end{fact}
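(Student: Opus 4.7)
The plan is to show that every vertex of the feasible polyhedron has integer coordinates whenever the constraint matrix is totally unimodular and the right-hand side vector is integral (which is the case of interest here, since the $L_j$'s and the bounds $1$ are integers). I would write the LP in standard inequality form $\max \mathbf{c}^T \mathbf{x}$ subject to $\mathbf{A}\mathbf{x} \le \mathbf{b}$, $\mathbf{x} \ge \mathbf{0}$, and stack the non-negativity constraints into an augmented constraint matrix $\widetilde{\mathbf{A}} = \bigl[\mathbf{A}^T \; -\mathbf{I}\bigr]^T$. The first step is a short observation: appending (or prepending) signed identity rows to a totally unimodular matrix yields another totally unimodular matrix, since any square submatrix of $\widetilde{\mathbf{A}}$ either lies entirely in $\mathbf{A}$, lies entirely in $-\mathbf{I}$, or can be reduced by cofactor expansion along the $\pm 1$ identity rows to a square submatrix of $\mathbf{A}$, whose determinant is $0$ or $\pm 1$ by Lemma~\ref{lem1}.

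The second step is the characterization of vertices. Any vertex $\mathbf{x}^*$ of the feasible polyhedron is the unique solution of a system $\mathbf{B}\mathbf{x} = \mathbf{b}^*$, where $\mathbf{B}$ is a nonsingular $n \times n$ submatrix of $\widetilde{\mathbf{A}}$ formed by $n$ linearly independent active constraints, and $\mathbf{b}^*$ is the corresponding subvector of the augmented right-hand side (which is still integer, since we only added zeros). By Cramer's rule,
\begin{equation}
	x^*_i \;=\; \frac{\det(\mathbf{B}_i)}{\det(\mathbf{B})}, \quad i = 1,2,\ldots,n,
\end{equation}
where $\mathbf{B}_i$ is obtained from $\mathbf{B}$ by replacing its $i$-th column with $\mathbf{b}^*$.

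The third step closes the argument by invoking total unimodularity twice. Since $\mathbf{B}$ is a nonsingular square submatrix of $\widetilde{\mathbf{A}}$, total unimodularity forces $\det(\mathbf{B}) \in \{+1, -1\}$. For the numerator, expanding $\det(\mathbf{B}_i)$ along its $i$-th column writes it as an integer linear combination (with coefficients from $\mathbf{b}^*$) of $(n-1) \times (n-1)$ minors of $\mathbf{B}$, each of which is $0$ or $\pm 1$; hence $\det(\mathbf{B}_i) \in \mathbb{Z}$. Combining these gives $x^*_i \in \mathbb{Z}$ for every coordinate, so the vertex $\mathbf{x}^*$ is integral.

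I expect the routine calculations (Cramer's rule, cofactor expansion) to go through without difficulty, so the main conceptual obstacle is actually the bookkeeping at the very first step: verifying that adjoining the non-negativity constraints preserves total unimodularity and that the active-constraint submatrix $\mathbf{B}$ used to define a vertex is indeed a submatrix of a totally unimodular matrix. Once that is in place, the rest is an application of Cramer's rule and Lemma~\ref{lem1}, and one concludes that the LP relaxation of problem \textbf{P1-3} has an integral optimal vertex, so the binary constraint~(\ref{eqn:mmhn95}) can be safely relaxed without loss of optimality.
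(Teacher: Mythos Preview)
Your argument is correct and is essentially the classical Hoffman--Kruskal proof via Cramer's rule. Note, however, that the paper does not actually prove this statement: it is recorded as a \emph{Fact} and simply cited from Schrijver~\cite{Schrijver}, so there is no in-paper proof to compare against---you have supplied the standard textbook argument where the authors chose to quote the result.

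Two minor remarks on presentation. First, in your first and third steps you invoke Lemma~\ref{lem1}, but that lemma is about the \emph{specific} constraint matrix of problem~\textbf{P1-3}; Fact~\ref{lem2} is a general statement about any totally unimodular constraint matrix, so you should appeal directly to the totally unimodular hypothesis rather than to Lemma~\ref{lem1}. Second, your closing sentence about safely relaxing constraint~(\ref{eqn:mmhn95}) is the content of Theorem~1 (which combines Lemma~\ref{lem1}, Fact~\ref{lem2}, and Fact~\ref{lem3}), not of Fact~\ref{lem2} alone, and should be kept outside this proof.
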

%Proof of this lemma is skipped here. Interesting readers are referred to \cite{Schrijver}.

\begin{fact}
For a linear programming problem, if it has feasible optimal solutions, then at least one of them occurs at a vertex of the polyhedron define by its constraints~\cite{Berenstein}.
\label{lem3}
\end{fact}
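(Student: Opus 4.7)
The plan is to take an arbitrary feasible optimal solution and, without decreasing the objective, migrate it to a vertex via a finite sequence of moves, each of which strictly increases the number of linearly independent active constraints. Concretely, write the LP in the standard form $\max\,\mathbf{c}^T\mathbf{x}$ subject to $\mathbf{A}\mathbf{x}\le \mathbf{b}$ (other forms reduce to this), fix a feasible optimum $\mathbf{x}^*$, and let $T(\mathbf{x}^*)$ be the index set of rows of $\mathbf{A}$ that are tight at $\mathbf{x}^*$. Let $\mathbf{A}_{T}$ be the corresponding submatrix. A vertex is precisely a feasible point whose tight-row submatrix has full column rank, so if $\mathrm{rank}(\mathbf{A}_{T})=n$ we are done.

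Otherwise $\mathbf{A}_{T}$ has a nontrivial null space, so I can pick a nonzero direction $\mathbf{d}$ with $\mathbf{A}_{T}\mathbf{d}=\mathbf{0}$. For sufficiently small $\varepsilon>0$, both $\mathbf{x}^*+\varepsilon\mathbf{d}$ and $\mathbf{x}^*-\varepsilon\mathbf{d}$ remain feasible because the currently tight constraints are unperturbed and the strictly satisfied constraints are perturbed by only an infinitesimal amount. Optimality of $\mathbf{x}^*$ now forces $\mathbf{c}^T\mathbf{d}=0$, for if $\mathbf{c}^T\mathbf{d}\neq 0$ then one of $\mathbf{x}^*\pm\varepsilon\mathbf{d}$ would strictly improve the objective. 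Hence I may translate along $\mathbf{d}$ (or $-\mathbf{d}$) while preserving both feasibility and optimality. I then push $\mathbf{x}^*$ along $\mathbf{d}$ up to the largest step $\varepsilon^*$ at which a previously inactive constraint becomes tight; the resulting point $\mathbf{x}^{**}$ is feasible, optimal, and its set of tight rows strictly contains $T(\mathbf{x}^*)$, with the new active row linearly independent from $\mathbf{A}_{T}$ by construction (since the new row has nonzero inner product with $\mathbf{d}$ while the old rows annihilate $\mathbf{d}$). Iterating increases $\mathrm{rank}(\mathbf{A}_{T})$ by at least one per step and must terminate in at most $n$ iterations at a vertex.

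The main obstacle is justifying that the maximum feasible step $\varepsilon^*$ along $\mathbf{d}$ or $-\mathbf{d}$ is finite, because if the feasible region contains the entire line $\{\mathbf{x}^*+t\mathbf{d}:t\in\mathbb{R}\}$ then no new constraint is ever hit. This situation is ruled out by the hypothesis that the LP \emph{has} a feasible optimum: along such a line the objective is constant (we already showed $\mathbf{c}^T\mathbf{d}=0$), and one may reduce to the case where the polyhedron contains no line by passing to a lower-dimensional quotient, or equivalently by first perturbing the LP into one with a pointed feasible region and taking limits. In the pointed case (which is the setting arising from \textbf{P1-3}, whose constraints already enforce $x_{k_j}\in[0,1]$ after LP relaxation and hence bound the feasible set), $\varepsilon^*$ is automatically finite and the induction goes through, yielding an optimal vertex and completing the proof.
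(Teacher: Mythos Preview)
The paper does not actually prove this statement; it is recorded as a cited \emph{Fact} from~\cite{Berenstein} and used without argument. So there is no ``paper's own proof'' to compare against. That said, your proposal is the classical rank-increasing argument and is essentially correct for \emph{pointed} polyhedra.

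There is, however, a genuine gap in how you handle the non-pointed case. The Fact as stated is false in full generality: if the feasible polyhedron contains a line it has \emph{no} vertices whatsoever (e.g., $\max\ 0$ subject to $x\in\mathbb{R}$ has feasible optimal solutions but no vertex). You correctly spot this obstacle, but your proposed remedies---``passing to a lower-dimensional quotient'' or ``perturbing the LP into one with a pointed feasible region and taking limits''---do not work as stated: a vertex of a quotient need not lift to a vertex of the original polyhedron, and a limit of vertices is not in general a vertex. The honest fix is to add the hypothesis that the polyhedron is pointed (equivalently, that $\mathbf{A}$ has full column rank, or that the feasible set contains no line), which is exactly the standard hypothesis under which this result is usually stated. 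You are right that in the paper's application to the LP relaxation of \textbf{P1-3} the box constraints $0\le x_{k_j}\le 1$ make the feasible region bounded and hence pointed, so your iterative argument goes through cleanly there; but as a proof of the Fact in the generality stated, the non-pointed case remains unresolved.
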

%\begin{proof}
%Linear programming inequality constraints are half-spaces and equality constraints are hyperplanes. So all these constraints together define a polyhedron. By the maximum principle of convex functions \cite{Berenstein}, the optimal value can be obtained on the boundaries. Since the vertex is the intersection of several constraints, the optimal solution can be found by checking the vertices.
%\end{proof}

Given the facts and Lemma~\ref{lem1}, we have the following theorem. The proof is straightforward and omitted. 

\begin{theorem}
The optimal solution of problem~\textbf{P1} can be obtained by solving a relaxed problem where the variables $x_{k_j}$ are allowed to take real values between $[0,1]$.
\end{theorem}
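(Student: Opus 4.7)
The plan is to invoke the standard LP-relaxation argument for totally unimodular constraint systems, using Lemma~\ref{lem1} together with Facts~\ref{lem2} and~\ref{lem3}. First, I would form the linear programming relaxation of problem \textbf{P1-3} by replacing the binary restriction $x_{k_j}\in\{0,1\}$ with the interval constraint $x_{k_j}\in[0,1]$, while leaving the load constraint $\sum_k x_{(j-1)K+k}\leq L_j$ and the single-association constraint $\sum_j x_{k+(j-1)K}\leq 1$ intact. Since the objective $\mathbf{c}^T\mathbf{x}$ is linear and the feasible polyhedron is nonempty (the all-zero vector is feasible) and bounded, the relaxed LP admits at least one optimal solution.

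Next, I would apply Fact~\ref{lem3} to conclude that at least one optimal solution of the relaxed LP is attained at a vertex of its feasible polyhedron. Lemma~\ref{lem1} shows that the constraint matrix $\mathbf{A}$ is totally unimodular; together with the observation that the right-hand side vector (formed by the $L_j$'s and $1$'s) is integral, Fact~\ref{lem2} then guarantees that every vertex of this polyhedron has integer coordinates. Combined with the $[0,1]$ bounds, any vertex-optimal solution must therefore take entries in $\{0,1\}$, and hence is feasible for the original binary problem \textbf{P1-3}.

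Finally, I would close the argument by a standard sandwich: because the feasible region of \textbf{P1-3} is contained in that of its LP relaxation, the optimal value of the relaxation is an upper bound on the optimal value of \textbf{P1-3}; conversely, the integral vertex optimum exhibited above is feasible for \textbf{P1-3} and attains that upper bound, so it is optimal for \textbf{P1-3} as well. Since \textbf{P1-3} is merely the vectorized re-indexing of \textbf{P1-2}, which itself is equivalent to \textbf{P1-1} after substituting the definitions of $c_{k_j}$ in~(\ref{eqn:mmhn80}), the same conclusion transfers back to problem \textbf{P1}.

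The main obstacle here is essentially a bookkeeping verification rather than a genuine technical difficulty: I need to make sure that appending the explicit upper-bound rows $x_{k_j}\leq 1$ (required by the relaxation) does not destroy total unimodularity, but this is immediate since augmenting a totally unimodular matrix by rows of an identity matrix preserves total unimodularity, and the bounds are integer-valued. Once this is noted, the result drops out directly from Lemma~\ref{lem1} and Facts~\ref{lem2}--\ref{lem3}, which is why the authors describe the proof as straightforward.
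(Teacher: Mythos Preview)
Your proposal is correct and follows essentially the same route the paper intends: form the LP relaxation, use Fact~\ref{lem3} to get a vertex optimum, invoke Lemma~\ref{lem1} with Fact~\ref{lem2} to conclude that vertex is integral (hence $0$--$1$), and finish with the standard sandwich between the integer program and its relaxation. Your extra remark that appending the $x_{k_j}\le 1$ identity rows preserves total unimodularity is a nice piece of hygiene that the paper glosses over.
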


%---------------------------------------------------
\begin{comment}

\begin{proof}
%
\begin{align}
\mbox{\textbf{NP1:}} &\;\; \max_{\mathbf{x}} \mathbf{c}^T \mathbf{x}  \label{eqn:mmhn75} \\
\mbox{s.t.} &\;\; \sum_{k=1}^K x_{(j-1)K+k} \leq L_j,~j=1,2,\cdots,J \nonumber \\
            &\;\; \sum_{j=1}^J x_{k+(j-1)K} \leq 1,~k=1,2,\cdots,K \nonumber \\
            &\;\; 0 \leq x_{k} \leq 1,~k=1,2,\cdots,K \nonumber \\
            &\;\; \mbox{Constraints (\ref{eqn:mmhn80})}. \nonumber
\end{align}

Problem~\textbf{NP1} is a relaxed of version \textbf{P1}. We can see that \textbf{NP1} is a linear programming problem. By Lemma \ref{lem3}, we know that at least one optimal solution of \textbf{NP1} is the vertex solution. Applying Lemmas~\ref{lem1} and~\ref{lem2} to \textbf{NP1}, we know that the vertex solutions are integers. Since the variable $x_k$ is restricted to the range of $[0,1]$, that means the optimal solutions of \textbf{NP1} are binary. Since the optimal value of \textbf{P1} is upper bounded by the optimal value of \textbf{NP1}, we could attain the optimal value of \textbf{P1} by setting variables according to the solution of \textbf{NP1}. That means \textbf{NP1} and \textbf{P1} are equivalent. We could solve \textbf{NP1} for \textbf{P1}.
\end{proof}

\end{comment}
%---------------------------------------------------

Given the above theorem, we could obtain the optimal solution of \textbf{P1} by solving the relaxed problem, termed \textbf{NP1}, using common LP solvers~\cite{Schrijver}.

%%%\reminder{What algorithm? Interior point method? Solution may not be at the boundary. Dual simplex? Worst case complexity is Exponential.}

%-----------------------------------------------------------
\subsection{Proportional Fairness} \label{subsec:cpf}
%-----------------------------------------------------------

%BS set: $\mathcal{J}=\left\{ 0, 1, 2, \cdots, J \right\}$, where $BS_0$ is the MBS, and $BS_j,j \neq 0$ is the PBS.
%
%User set: $\mathcal{K}=\left\{ 1, 2, \cdots, K\right\}$.

%%%\reminder{What if $\sum_j x_{k_j} = 0$? That is a user does not connect to any BS. Then utility is $-\infty$. Or define it to be $0$?}
%%%
%%%\reminder{Do we need to require $\sum_j L_j \geq K$? That is loading capacity greater than user number.}

In this section, we take proportional fairness among user achievable rates into consideration. 
%Since a user's achievable rate is the sum of rate from all the base stations, the 
The problem can be formulated as follows.
\begin{align}
\mbox{\textbf{P2-1:}} &\;\; \max_{\{x_{k_j}\}} \sum_{k=1}^K {\log\left(\sum_{j=1}^J x_{k_j}c_{k_j}\right)} \label{eqn:mmhn91} \\
\mbox{s.t.} &\;\; \mbox{same constraints as problem {\bf P1-2}}. \nonumber
%\mbox{s.t.} &\;\; \sum_{k=1}^K x_{k_j} \leq L_j, j=1,2,\cdots,J \nonumber \\
%            &\;\; \sum_{j=1}^J x_{k_j} \leq 1, k=1,2,\cdots,K \nonumber \\
%            &\;\; \mbox{Constraints (\ref{eqn:mmhn95})~(\ref{eqn:mmhn80})}. \nonumber
\end{align}

Problem~\textbf{P2-1} is a nonlinear integer programming problem, which is generally NP-hard. 
%NP-complete as well. 
To get a better understanding of the problem, we examine its equivalent problem as follows.
\begin{align}
\mbox{\textbf{P2-2:}} &\;\; \max_{x_{k_j}} \prod_{k=1}^K{\left(\sum_{j=1}^J x_{k_j}c_{k_j}\right)} \label{eqn:mmhn74} \\
\mbox{s.t.} &\;\; \mbox{same constraints as problem {\bf P1-2}}. \nonumber
%\mbox{s.t.} &\;\; \sum_{k=1}^K x_{k_j} \leq L_j, j=1,2,\cdots,J \nonumber \\
%            &\;\; \sum_{j=1}^J x_{k_j} \leq 1, k=1,2,\cdots,K \nonumber \\
%            &\;\; \mbox{Constraints (\ref{eqn:mmhn95})~(\ref{eqn:mmhn80})}. \nonumber
\end{align}

Problem \textbf{P2-2} is a geometric programming problem, with binary variables. The objective function is a posynomial function with $J^K$ terms. Conventionally, to solve geometric programming problems we need to introduce new variables such as $y = \log(x)$ so that geometric programming can be solved via convex programming. However, here $x_{k_j}$'s are binary. Since $\log(0)=-\infty$, we could not apply these techniques. 
Another heuristic scheme is to firstly sort these $J^K$ coefficients, and then find $L_j$ maximal coefficients for each BS. However, even sorting these $J^K$ coefficients could be computationally prohibitive even for a small system, 
%area with just $10$ BS's and $100$ users, 
which requires $\mathcal{O}(J^K \log (J^K))$ operations.

A key observation about 
%$\log()$ 
the logarithm function is that $\log(\sum_i{\tau_i}) \leq \sum_i \log(\tau_i)$, for all $\tau_i \geq 2$. Therefore, in practice,\footnote{Recall that $c_{k_j}$ is the achievable rate of user $k$ connecting to BS $j$. $c_{k_j} \geq 2$ is generally satisfied in current wireless systems with a sufficiently large bandwidth and high transmission power.} 
the optimal value of problem \textbf{P2-1} is upper bounded by that of the following problem.
\begin{align}
\mbox{\textbf{NP2:}} &\;\; \max_{x_{k_j}} \sum_{k=1}^K \sum_{j=1}^J x_{k_j}{\log(c_{k_j})} \label{eqn:mmhn71} \\
\mbox{s.t.} &\;\; \mbox{same constraints as problem {\bf P1-2}}. \nonumber
%\mbox{s.t.} &\;\; \sum_{k=1}^K x_{k_j} \leq L_j, j=1,2,\cdots,J  \nonumber \\
%            &\;\; \sum_{j=1}^J x_{k_j} \leq 1, k=1,2,\cdots,K \nonumber \\
%            &\;\; \mbox{Constraints (\ref{eqn:mmhn95}),~(\ref{eqn:mmhn80})}. \nonumber
\end{align}

%However, we have the following proposition.
We have the following results for the transformed problems.

\begin{lemma}
Problems~\textbf{P2-1} and~\textbf{NP2} are equivalent.
\label{prop222}
\end{lemma}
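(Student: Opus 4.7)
My plan is to exploit the user-side constraint $\sum_{j=1}^J x_{k_j}\leq 1$ together with the binary nature of the variables $x_{k_j}\in\{0,1\}$, which together force, for every user $k$, at most one of the $x_{k_j}$ to equal $1$. This collapses both objectives to the same per-user expression, so the equivalence should be obtained term by term rather than through the posynomial/log inequality argument that preceded the lemma.

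First I would fix an arbitrary feasible point $\{x_{k_j}\}$ and, for each user $k$, split into two cases based on the one-BS-per-user constraint. In the case where exactly one index $j^\star(k)$ has $x_{k_{j^\star}}=1$, both inner sums collapse to a single term: $\sum_{j} x_{k_j} c_{k_j} = c_{k_{j^\star}}$ and $\sum_{j} x_{k_j}\log(c_{k_j}) = \log(c_{k_{j^\star}})$, so the $k$-th contribution to the objective of \textbf{P2-1} equals $\log(c_{k_{j^\star}})$, which coincides with the $k$-th contribution to the objective of \textbf{NP2}. In the other case, where $x_{k_j}=0$ for every $j$, the \textbf{NP2} contribution is trivially $0$, and the \textbf{P2-1} contribution is $\log(0)$, which we interpret as $0$ using the convention $\mathcal{U}(0)=0$ stated after~(\ref{eqn:mmhn96}). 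Summing over $k$ then gives identical objective values for the two problems on every feasible point.

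Since the feasible sets of \textbf{P2-1} and \textbf{NP2} are exactly the same (they inherit the constraints of \textbf{P1-2} verbatim) and the two objectives agree pointwise on that common feasible set, any optimizer of one is an optimizer of the other, which is the sense in which the two problems are equivalent.

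The only subtle point I expect is the $\log(0)$ edge case for unserved users, and this is handled by the paper's own convention $\mathcal{U}(0)=0$; no appeal to the inequality $\log(\sum_i \tau_i)\leq \sum_i \log(\tau_i)$ (which is generally strict) is actually needed for the equivalence, because the at-most-one-BS constraint eliminates the multi-term sums inside the logarithm in the first place. This is why the argument is ultimately clean despite \textbf{P2-1} looking nonlinear and \textbf{NP2} looking linear.
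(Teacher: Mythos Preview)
Your proposal is correct and follows essentially the same approach as the paper's proof: both exploit the binary constraint $\sum_j x_{k_j}\le 1$ to collapse each user's inner sum to at most one term, invoke the convention $\mathcal{U}(0)=0$ for unserved users, and conclude that the two objectives agree on every feasible point. Your write-up is more explicit (splitting into the two cases and noting that the posynomial inequality is not needed), but the underlying argument is identical.
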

\begin{proof}
Recall that if $\eta_k=0$, we define $\mathcal{U}(\eta_k)=0$.
%$\mathcal{U}(\eta_k)=\log(\eta_k)=0$.
The second constraint $\sum_{j=1}^J x_{k_j} \leq 1$ imposes that each user could only connect to one BS. Consequently, $\sum_j x_{k_j}{\log(c_{k_j})} = \log(\sum_j x_{k_j}c_{k_j})$. Furthermore, we have $\sum_k \sum_j x_{k_j}{\log(c_{k_j})} = \sum_k{\log(\sum_j x_{k_j}c_{k_j})}$.
\end{proof}

\smallskip
Comparing problems \textbf{NP2} to \textbf{P1-2}, we find they are actually equivalent. Thus we can obtain the optimal value of \textbf{P2-1} by applying the same technique used to solve problem \textbf{P1-2}. 
%\textbf{NP1}. 
%An important conclusion stated in Corollary \ref{coro1} readily follows.
We then have the following lemma.
%corollary. 

\begin{lemma}
Sum rate maximization in Section \ref{subsec:cms} also achieves proportional fairness.
\label{coro1}
\end{lemma}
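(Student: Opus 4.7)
The plan is to show that the proportional fairness problem \textbf{P2-1} reduces, after the transformation used above, to a linear integer program with \emph{exactly the same} constraint structure as \textbf{P1-2}, so the optimality argument for sum-rate maximization carries over verbatim. Concretely, I would first invoke Lemma~\ref{prop222} to replace \textbf{P2-1} by its equivalent form \textbf{NP2}, which has objective $\sum_{k}\sum_{j} x_{k_j}\log(c_{k_j})$ together with the same load and assignment constraints~\eqref{eqn:mmhn95}, $\sum_k x_{k_j}\le L_j$ and $\sum_j x_{k_j}\le 1$ as \textbf{P1-2}.

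Next I would observe that, letting $c'_{k_j}:=\log(c_{k_j})$ (a well-defined finite quantity under the working assumption $c_{k_j}\ge 2$), problem \textbf{NP2} is nothing but \textbf{P1-2} with the coefficient vector $\mathbf{c}$ replaced by $\mathbf{c}'$. Crucially, Lemma~\ref{lem1} asserts total unimodularity of the constraint matrix $\mathbf{A}$, and its proof depends only on the structural pattern of $\mathbf{A}$ in~\eqref{eqn:mmhn81}, \emph{not} on the objective coefficients. Therefore the same $\mathbf{A}$ is totally unimodular for \textbf{NP2}.

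I would then combine this with Facts~\ref{lem2} and~\ref{lem3}, exactly as in Theorem~1: relax the binary variables to $x_{k_j}\in[0,1]$, solve the resulting LP, and invoke total unimodularity to conclude that an optimal vertex solution is integral and hence feasible (and optimal) for \textbf{NP2}. By Lemma~\ref{prop222} this integral solution also optimally solves \textbf{P2-1}, which is precisely the statement that the sum-rate-maximization approach of Section~\ref{subsec:cms} attains proportional fairness.

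I do not anticipate a real obstacle here because the argument is essentially bookkeeping: the only substantive step is recognizing that Lemma~\ref{lem1} is an objective-independent structural statement, and that the log-transformation in Lemma~\ref{prop222} converts \textbf{P2-1} into a linear objective over the same feasible region. The one subtlety worth flagging in the writeup is the mild assumption $c_{k_j}\ge 2$ used to justify dropping the $\log(\sum_i \tau_i)\le \sum_i\log(\tau_i)$ step, and the convention $\mathcal{U}(0)=0$ so that users left unassigned do not introduce $\log 0$ terms in the equivalence argument.
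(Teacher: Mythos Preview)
Your proposal is correct and follows essentially the same route as the paper: the paper's justification (given in the paragraph immediately preceding the lemma, with no separate proof block) is simply that \textbf{NP2} has the same structure as \textbf{P1-2} with $\log(c_{k_j})$ in place of $c_{k_j}$, so the total-unimodularity/LP-relaxation technique of Section~\ref{subsec:cms} applies verbatim. One small point: the assumption $c_{k_j}\ge 2$ and the inequality $\log(\sum_i\tau_i)\le\sum_i\log(\tau_i)$ that you flag are not actually needed for the equivalence in Lemma~\ref{prop222}---that equivalence follows purely from the single-association constraint $\sum_j x_{k_j}\le 1$ together with the convention $\mathcal{U}(0)=0$---so you can drop that caveat from your writeup.
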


%%%\reminder{Is it because everyone can be connected? So no fairness needed.}

%\reminder{The upper bound is kind of trivial. We may delete it.}
%Next we derive an upper bound of the optimal value of problem \textbf{P2-1}. 
\begin{lemma}
The optimal value of problem \textbf{P2-1} is upper bounded by $UB_1=\sum_{k=1}^K \max_{j} \log_2(c_{k_j})$.
\label{prop1}
\end{lemma}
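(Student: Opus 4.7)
The plan is to exploit the single-association constraint $\sum_{j=1}^J x_{k_j} \leq 1$ together with the monotonicity of the logarithm. First I would observe that because $x_{k_j} \in \{0,1\}$ and at most one of them can be $1$ for each fixed user $k$, the inner sum collapses: either $\sum_{j} x_{k_j} c_{k_j} = 0$ (the user is unserved) or $\sum_{j} x_{k_j} c_{k_j} = c_{k_{j^\star}}$ for some unique $j^\star$ depending on the feasible assignment. In either case, $\sum_{j} x_{k_j} c_{k_j} \leq \max_{j} c_{k_j}$.

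Next I would apply the logarithm (monotonically increasing) term-by-term: $\log_2\!\bigl( \sum_{j} x_{k_j} c_{k_j}\bigr) \leq \log_2\!\bigl( \max_{j} c_{k_j}\bigr) = \max_{j} \log_2(c_{k_j})$, where the last equality uses monotonicity of $\log_2$ again to push the max outside. Summing this inequality over all users $k=1,\dots,K$ yields exactly the claimed bound
\begin{equation*}
\sum_{k=1}^K \log_2\!\left( \sum_{j=1}^J x_{k_j} c_{k_j}\right) \;\leq\; \sum_{k=1}^K \max_j \log_2(c_{k_j}) \;=\; UB_1,
\end{equation*}
which holds for every feasible $\{x_{k_j}\}$ and therefore bounds the optimal value as well.

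The only subtlety is the unserved-user case $\sum_j x_{k_j} = 0$, where naively $\log_2(0) = -\infty$; here I would invoke the convention $\mathcal{U}(0) = 0$ already introduced right after equation~(\ref{eqn:mmhn96}), so that the user-$k$ term contributes $0$ to the objective. Provided $\max_j c_{k_j} \geq 1$ (which is consistent with the paper's working assumption that $c_{k_j}$ represents meaningful achievable rates), the right-hand-side contribution $\max_j \log_2(c_{k_j}) \geq 0$ still dominates the left-hand-side zero, and the bound is preserved. I do not anticipate any real technical obstacle here; the argument is essentially a one-line consequence of the single-association constraint and monotonicity, with the only care needed being the boundary convention for unserved users.
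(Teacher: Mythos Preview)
Your proposal is correct and matches the paper's core argument: both use the single-association constraint $\sum_j x_{k_j}\le 1$ to bound $\sum_j x_{k_j}c_{k_j}\le \max_j c_{k_j}$ and then apply monotonicity of $\log_2$. The paper first takes an unnecessary detour through a log-sum-exp bound $m\log_2 e+\log_2 J$ before arriving at the same tighter bound you go to directly, and you handle the unserved-user boundary case more explicitly than the paper does.
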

\begin{proof}
Denote $m=\max{\left\{\ln(c_{k_1}),\ln(c_{k_2}),\cdots,\ln(c_{k_J})\right\}}$, we have
\begin{eqnarray}
\log_2\left(\sum_{j=1}^J x_{k_j}c_{k_j}\right) &\leq& \log_2\left(\sum_{j=1}^J \frac{e^m}{e^m} e^{\ln(c_{k_j})}\right) \nonumber \\
&=&\log_2(e^m) + \log_2\left(\sum_{j=1}^J e^{\ln(c_{k_j})-m}\right) \nonumber \\
&\leq&m\log_2(e)+\log_2(J)
\label{eqn:mmhn73}
\end{eqnarray}
The first inequality is because $x_{k_j} \leq 1$.
The second inequality is due to the fact that $m$ is the largest one among all the $x_{k_j}c_{k_j}$ and $e^{\ln(x_{k_j}c_{k_j})-m} \leq 1$.

On the other hand, it follows the constraint $\sum_{j=1}^J x_{k_j} \leq 1$ that
%\begin{equation}
$\log_2\left(\sum_{j=1}^J x_{k_j}c_{k_j}\right) \leq \log_2(e^m)$, 
%\label{eqn:mmhn72}
%\end{equation}
%
%Since (\ref{eqn:mmhn72}) is 
which is a better bound than (\ref{eqn:mmhn73}). We thus have $UB_1=\sum_{k=1}^K \max_{j} \log_2(c_{k_j})$.
%%%\reminder{Note that since we have $\sum_j L_j \geq K$. $\sum_{j=1}^J{x_{k_j}=1}$. That is a user will be served by a base station.}
\end{proof}

\smallskip
For comparison purpose, we propose two sub-optimal greedy algorithms, i.e., Algorithms~\ref{alg:gafmm1} and~\ref{alg:gafmm2}, as benchmarks.
%The above two greedy algorithms 
They can be directly used for comparison with problem \textbf{P1-1}.
To compare with problem \textbf{P2-1}, in Algorithm~\ref{alg:gafmm1}, we need to change Steps $7$ and $8$ as
``\textbf{while} $\exists j,~L_j \neq 0$ $~\&~\max_{k,j} \log(c_{k,j})>0$ \textbf{do}'' and ``Find $(k^*,j^*)=\argmax_{k,j} \{ \log(c_{k,j}) \}$,'' respectively. 
In Algorithm \ref{alg:gafmm2}, we need to change Step $8$ and $9$ as
``\textbf{while} $L_{j} \neq 0$ $~\&~\max_{k} \log(c_{k,j})>0$'' and ``Find $(k^*,j)=\argmax_{k} \log(c_{k,j})$,'' respectively. 

\begin{algorithm} [!t]%[H]
\small
\SetAlgoLined
	Initialize $\mathcal{K}=\left\{1,2,\cdots,K\right\}$, $L_j,\forall j \in \mathcal{J}$ and $x_{k_j}$ to be an all-zero matrix \;
	\For{$k=1$ to $K$}{
		\For{$j=1$ to $J$}{
			Compute $c_{k_j}$ as in~(\ref{eqn:mmhn80}) \;
			}
		}
	\While{$\exists j,~L_j \neq 0$}{
			Find $(k^*,j^*)=\argmax_{k,j} \{ c_{k,j} \}$ \;
			\If{$L_{j^*} \neq 0$}{
				$x_{k^{*}_{j^*}} = 1$ \;
				$L_{j^*} = L_{j^*} - 1$ \;
				$\mathcal{K}=\mathcal{K} \backslash k^*$ \;
			}
			}
\caption{Greedy Algorithm 1 for User Association}
\label{alg:gafmm1}
\end{algorithm}

\begin{algorithm} [!t]%[H]
\small
\SetAlgoLined
	Initialize $\mathcal{K}=\left\{1,2,\cdots,K\right\}$, $L_j,\forall j \in \mathcal{J}$ and $x_{k_j}$ to be an all-zero matrix \;
	\For{$k=1$ to $K$}{
		\For{$j=1$ to $J$}{
			Compute $c_{k_j}$ as in~(\ref{eqn:mmhn80}) \;
			}
		}
	\For{$j=1$ to $J$}{
			\While{$L_{j} \neq 0$}{
				Find $(k^*,j)=\argmax_{k} \{ c_{k,j} \}$ \;
				$x_{k^{*}_{j}} = 1$ \;
				$L_{j} = L_{j} - 1$ \;
				$\mathcal{K}=\mathcal{K} \backslash k^*$ \;
			}
			}
\caption{Greedy Algorithm 2 for User Association}
\label{alg:gafmm2}
\end{algorithm}

%%%
%%%\reminder{Does BSs have priority?}
%%%
%%%\reminder{Or each BS chooses the best $L_j$ users?}
%%%
%%%\reminder{prove a bound for this}
%%%
%%%\reminder{Special case: only $J$ users, each BS only serve one user}

%-----------------------------------------------------------
\subsection{Joint Resource Allocation and User Association} \label{subsec:jraua}
%-----------------------------------------------------------

In this section, we take resource allocation into account.
Consider a massive MIMO OFDMA HetNet.
%The system resources contain time and frequency. 
In OFDMA systems, such as LTE, the time-frequency resource is divided into resource blocks (RB). A typical RB consists of $12$ subcarriers ($180$kHz) in the frequency domain and $7$ OFDMA symbols in the time domain ($0.5$ ms). So the system may have up to several hundreds of RBs. We normalize it to be a unit number.
A user $k$ connecting to a BS $j$ gets a portion $\beta_{k_j}$ of the overall resource.
The goal is to maximize the system utility considering both resource allocation and user association.
%Note, $\log()$ rate utility is considered here. And 

Considering the logarithm rate utility and defining $\Phi_j = \left\{k \;|\; x_{k_j} = 1 \right\}$, the problem is formulated as follows.
\begin{align}
\mbox{\textbf{P3-1:}} &\;\; \max_{\{x_{k_j},\beta_{k_j}\}} \sum_{k=1}^K{\log\left(\sum_{j=1}^J x_{k_j}c_{k_j} \beta_{k_j} \right)} \label{eqn:mmhn63} \\
\mbox{s.t.} &\;\; \sum_{k \in \Phi_j} \beta_{k_j} \leq 1, \; j=1,2,\cdots,J \nonumber \\
&\;\; \mbox{same constraints as problem {\bf P1-2}}. \nonumber 
%\mbox{s.t.} &\;\;   \sum_{k=1}^K x_{k_j} \leq L_j, j=1,2,\cdots,J \nonumber \\
%&\;\; \sum_{j=1}^J x_{k_j} \leq 1, k=1,2,\cdots,K \nonumber \\
%&\;\; \sum_{k \in \Phi_j} \beta_{k_j} \leq 1, j=1,2,\cdots,J \nonumber \\
%&\;\; \mbox{Constraints (\ref{eqn:mmhn95}),~(\ref{eqn:mmhn80})}. \nonumber
\end{align}
%Note that in (\ref{eqn:mmhn63}), $\Phi_j$ is the user set defined as:
%\begin{eqnarray}
%\label{eqn:mmhn63nn}
%\Phi_j = \left\{k | x_{k_j} = 1 \right\}.
%\end{eqnarray}
To solve problem 
%(\ref{eqn:mmhn63}) contains two levels of coupled question: 
{\bf P3-1}, we need to: (i) select users for each BS to serve and (ii) allocate resources to the associated users at each BS. We next propose a series of primal decomposition and dual decomposition to solve the problem optimally. 

%It is worth pointing out that another possible formulation of the problem is as follows.
%\begin{eqnarray}
%\max_{x_{k_j},\beta_{k_j}} && \sum_{k=1}^K{\log\left(\sum_{j=1}^J x_{k_j}c_{k_j} \beta_{k_j} \right)} \label{eqn:mmhn63eq} \\
%\mbox{s.t.} &&   \sum_{k=1}^K x_{k_j} \leq L_j, j=1,2,\cdots,J \nonumber \\
%&& \sum_{j=1}^J x_{k_j} = 1, k=1,2,\cdots,K \nonumber \\
%&& \sum_{k \in \Phi_j} \beta_{k_j} \leq 1, j=1,2,\cdots,J \nonumber \\
%&& \mbox{Constraints (\ref{eqn:mmhn95}),~(\ref{eqn:mmhn80})}. \nonumber
%\end{eqnarray}

It is worth noting that the problem can also be formulated in a different way, by substituting constraint $\sum_{j=1}^J x_{k_j} \leq 1, k=1,2,\cdots,K$ with a new constraint $\sum_{j=1}^J x_{k_j} = 1, k=1,2,\cdots,K$. We call this problem {\bf P3-2}.  
Comparing these two formulations, we have the following observations. 
\begin{enumerate}
	\item Problem %(\ref{eqn:mmhn63}) 
	{\bf P3-1} does not require that every user must be connected, while problem {\bf P3-2} requires each user be connected, even under unfavorable conditions.  
	%However, problem (\ref{eqn:mmhn63eq}) does require all the users to be connected even under unfavorable conditions.
	\item Problem {\bf P3-2} %(\ref{eqn:mmhn63eq}) 
	has a more stringent requirement than problem {\bf P3-1}. %(\ref{eqn:mmhn63}). 
	Therefore the optimal value of problem {\bf P3-2}
	%(\ref{eqn:mmhn63}) 
	is upper bounded by that of problem {\bf P3-1}. %(\ref{eqn:mmhn63}).
	\item Since problem {\bf P3-1}
	%(\ref{eqn:mmhn63}) 
	offers more choices of user association, problem {\bf P3-1}
	%(\ref{eqn:mmhn63}) 
	is slower in convergence than problem {\bf P3-2}. 
	%(\ref{eqn:mmhn63eq}).
\end{enumerate}

We focus on the harder problem {\bf P3-1}. 
%(\ref{eqn:mmhn63}). 
Given the algorithm to solve problem {\bf P3-1}, problem {\bf P3-2} can be readily solved. 
%(\ref{eqn:mmhn63}), the solution to problem (\ref{eqn:mmhn63eq}) is ready to obtain.
Due to integer variables $x_{k_j}$ and real variables $\beta_{k_j}$, problem {\bf P3-1}
%(\ref{eqn:mmhn63}) 
is a mixed integer nonlinear programming problem (MINLP), which is generally NP-hard. However, next we propose an algorithm to obtain its optimal solution.

Since $x_{k_j}$'s take binary values and $\sum_{j=1}^J x_{k_j} \leq 1$, we have
%\begin{equation}
%\label{eqn:mmhn62}
$\sum_{k=1}^K{\log\left(\sum_{j=1}^J x_{k_j}c_{k_j} \beta_{k_j} \right)}$ $=$ $\sum_{k=1}^K\sum_{j=1}^J x_{k_j}$${\log(c_{k_j} \beta_{k_j} )}$. 
%\end{equation}
Recall that if $\sum_{j=1}^J x_{k_j}=0$, the logarithmic utility is $0$.
%Also note that the choices of $\beta_{k_j}$ rely on the values of $x_{k_j}$. 
Thus problem {\bf P3-1} can be reformulated as
% (\ref{eqn:mmhn63}) is in fact the following:
\begin{align}
\mbox{\textbf{P3-3:}} &\;\; \max_{\{x_{k_j},\beta_{k_j}\}} \sum_{k=1}^K\sum_{j=1}^J x_{k_j}{\log(c_{k_j} \beta_{k_j} )} \label{eqn:mmhn61} \\
\mbox{s.t.} &\;\; \mbox{same constraints as problem {\bf P3-1}}. \nonumber
%\mbox{s.t.} &&   \sum_{k=1}^K x_{k_j} \leq L_j, j=1,2,\cdots,J \nonumber \\
%&& \sum_{j=1}^J x_{k_j} \leq 1, k=1,2,\cdots,K \nonumber \\
%&& \sum_{k \in \Phi_j} \beta_{k_j}(x_{k_j}) \leq 1, j=1,2,\cdots,J \nonumber \\
%&& \mbox{Constraints (\ref{eqn:mmhn95})~(\ref{eqn:mmhn80})}. \nonumber
\end{align}

The choices of $\beta_{k_j}$ rely on the values of $x_{k_j}$. Given these coupled variables, we first apply the Primal Decomposition method~\cite{chiang2006} to decompose problem {\bf P3-3}
%(\ref{eqn:mmhn61}) 
to the following two levels of problems. Fixing variables $x_{k_j}$'s, we have the \textit{lower level problem} as
\begin{align}
\max_{\{\beta_{k_j}\}} &\;\; \sum_{k=1}^K\sum_{j=1}^J x_{k_j}{\log(c_{k_j} \beta_{k_j} )} \label{eqn:mmhn60} \\
\mbox{s.t.} &\;\; \sum_{k \in \Phi_j} \beta_{k_j} \leq 1, j=1,2,\cdots,J. \nonumber 
\end{align}
When the $\beta_{k_j}$'s are fixed, the \textit{higher level problem} (or, the \textit{master problem}) is given by
\begin{align}
\max_{\{x_{k_j}\}} &\;\; \sum_{k=1}^K\sum_{j=1}^J x_{k_j}{\log(c_{k_j} \beta_{k_j} )} \label{eqn:mmhn59} \\
\mbox{s.t.} &\;\; \mbox{same constraints as problem {\bf P1-2}}. \nonumber
%\mbox{s.t.} &&   \sum_{k=1}^K x_{k_j} \leq L_j, j=1,2,\cdots,J \nonumber \\
%&& \sum_{j=1}^J x_{k_j} \leq 1, k=1,2,\cdots,K \nonumber \\
%&& \mbox{Constraints (\ref{eqn:mmhn95}),~(\ref{eqn:mmhn80})}. \nonumber
\end{align}
%where $\beta_{k_j}$ are fixed.
Since there are no couplings among the subproblems, the lower level problem~(\ref{eqn:mmhn60}) can be further decomposed into $L$ subproblems as follows.
\begin{align}
\max_{\{\beta_{k_j}\}} &\;\; \sum_{k=1}^K x_{k_j}{\log(c_{k_j} \beta_{k_j} )} \label{eqn:mmhn58} \\
\mbox{s.t.} &\;\; \sum_{k \in \Phi_j}^K \beta_{k_j} \leq 1, j=1,2,\cdots,J. \nonumber 
\end{align}

Defining Lagrange multiplier $\lambda$, the Lagrangian of problem~(\ref{eqn:mmhn58}) is defined as
\begin{equation}
\mathcal{L} = \sum_{k=1}^K x_{k_j}{\log(c_{k_j} \beta_{k_j} )} + \lambda \left(1-\sum_{k=1}^K \beta_{k_j}\right). \label{eqn:mmhn57} 
\end{equation}
%where $\lambda$ is the Lagrange multiplier. 
Applying KKT conditions \cite{boydcvx}, the optimal solution can be obtained as follows. 
\begin{equation}
\beta_{k_j} = \frac{x_{k_j}}{\sum_{k=1}^K x_{k_j}}. \label{eqn:mmhn56} 
\end{equation}

Substituting~(\ref{eqn:mmhn56}) into the master problem, the objective function becomes 
\begin{equation} \label{eqn:mmhn55} 
\sum_{k=1}^K\sum_{j=1}^J x_{k_j}{\log\left( \frac{c_{k_j}}{\sum_{k=1}^K x_{k_j}} \right)}.
\end{equation}
%\begin{align}
%\max_{x_{k_j}} &\;\; \sum_{k=1}^K\sum_{j=1}^J x_{k_j}{\log\left( \frac{c_{k_j}}{\sum_{k=1}^K x_{k_j}} \right)} \label{eqn:mmhn55} \\
%\mbox{s.t.} &\;\; \mbox{same constraints as problem {\bf P1-2}}. \nonumber
%\mbox{s.t.} &&   \sum_{k=1}^K x_{k_j} \leq L_j, j=1,2,\cdots,J \nonumber \\
%&& \sum_{j=1}^J x_{k_j} \leq 1, k=1,2,\cdots,K \nonumber \\
%&& \mbox{Constraints (\ref{eqn:mmhn95}),~(\ref{eqn:mmhn80})}. \nonumber
%\end{align}
Note that we have dropped one $x_{k_j}$ term in~(\ref{eqn:mmhn55}), since due to the definition~(\ref{eqn:mmhn95}), we have $(x_{k_j})^2=x_{k_j}$.
Since $\sum_{k=1}^K x_{k_j}$ is in the denominator, problem~(\ref{eqn:mmhn55}) has coupled objectives. The main idea of addressing the coupled objective is to introduce auxiliary variables and additional equality constraints so that the coupling in the objective function is transferred to coupling in the constraint~\cite{chiang2006}.
We thus introduce a new variable, which is defined as:
\begin{equation}
\Xi_j=\sum_{k=1}^K x_{k_j}. \label{eqn:mmhn54} 
\end{equation}

To solve the above problem, we relax $x_{k_j}$ to a real number in $[0, 1]$. However, we will show later that even if we have relaxed the variables, we could still find the optimal solution to the original problem. The relaxed problem to be solved is
\begin{align}
\max_{\{x_{k_j}\}} &\;\; \sum_{k=1}^K\sum_{j=1}^J x_{k_j}{\log\left( \frac{c_{k_j}}{\Xi_j} \right)} \label{eqn:mmhn53} \\
\mbox{s.t.} &\;\; \Xi_j \leq L_j, \; j=1,2,\cdots,J \nonumber \\
&\;\; \sum_{j=1}^J x_{k_j} \leq 1, \; k=1,2,\cdots,K \nonumber \\
&\;\; 0 \leq x_{k_j} \leq 1, \mbox{for all} \; k, j \nonumber \\
&\;\; \mbox{Constraints (\ref{eqn:mmhn80}),~(\ref{eqn:mmhn54})}. \nonumber
\end{align}

Problem~(\ref{eqn:mmhn53}) is a convex optimization problem. Defining Lagrange multipliers for the equality constraints~(\ref{eqn:mmhn54}), problem~(\ref{eqn:mmhn53}) can be solved with the dual decomposition method.
Alternatively, we propose Algorithm~\ref{alg:tldda} to obtain the optimal solution of problem (\ref{eqn:mmhn53})~\cite{Jeff13,yixuicc,yiaccess}. In Algorithm~\ref{alg:tldda}, $\delta^{(t)}$ is the step size at the $t$-th iteration given by
\begin{eqnarray}
\label{eqn:mmhn52}
\delta^{(t)} = \frac{\vartheta}{t+\gamma},
\end{eqnarray}
where $\vartheta$ and $\gamma$ are positive numbers.

\begin{theorem}
Algorithm \ref{alg:tldda} optimally solves problem~(\ref{eqn:mmhn53}).
\end{theorem}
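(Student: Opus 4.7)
The plan is to certify that Algorithm \ref{alg:tldda} implements a convergent primal--dual subgradient method on a convex program for which strong duality holds, and then appeal to standard convergence theory.

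First, I would confirm the hypotheses needed for duality. Problem~(\ref{eqn:mmhn53}) maximizes the sum $\sum_{k,j} x_{k_j}\log(c_{k_j}/\Xi_j)$ over a polyhedron defined by the affine constraints $\Xi_j=\sum_k x_{k_j}$, $\Xi_j\le L_j$, $\sum_j x_{k_j}\le 1$, and box constraints. Since the objective is concave on the feasible region (the author has already asserted this) and all constraints are affine, Slater's condition is trivially satisfied by any strictly interior point (e.g.\ small uniform $x_{k_j}$), and so strong duality holds and KKT conditions are necessary and sufficient for optimality.

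Next I would describe the two-level decomposition exactly as set up in the lead-up to the algorithm. Dualize the coupling equalities $\Xi_j=\sum_k x_{k_j}$ with multipliers $\nu_j$; the partial Lagrangian becomes
\begin{equation*}
\mathcal{L}(\mathbf{x},\bm{\Xi},\bm{\nu})
=\sum_{k,j} x_{k_j}\!\left[\log\!\left(\tfrac{c_{k_j}}{\Xi_j}\right)+\nu_j\right]-\sum_j \nu_j \Xi_j,
\end{equation*}
which separates into (i) a per-BS subproblem in $\Xi_j$ whose stationary condition yields a closed-form update, and (ii) a per-user assignment subproblem in $x_{k_j}$ subject to $\sum_j x_{k_j}\le 1$ and the box constraints, which is solved by picking, for each $k$, the BS index maximizing $\log(c_{k_j}/\Xi_j)+\nu_j$ (with truncation to $[0,1]$). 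These are precisely the inner updates of Algorithm~\ref{alg:tldda}, and by strong duality, for the optimal $\bm{\nu}^\star$ the corresponding $(\mathbf{x}^\star,\bm{\Xi}^\star)$ is primal optimal.

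Finally, I would argue the dual outer loop converges. The dual function $g(\bm{\nu})=\max_{\mathbf{x},\bm{\Xi}}\mathcal{L}$ is convex and, in general, nonsmooth; the outer update in Algorithm~\ref{alg:tldda} is a projected subgradient step, with the subgradient supplied by the constraint residual $\Xi_j-\sum_k x_{k_j}$ evaluated at the inner optimizers. Because the step size $\delta^{(t)}=\vartheta/(t+\gamma)$ satisfies the classical diminishing-step conditions $\sum_t \delta^{(t)}=\infty$ and $\sum_t (\delta^{(t)})^2<\infty$, standard results on subgradient methods for convex minimization guarantee $\bm{\nu}^{(t)}\to\bm{\nu}^\star$ and $g(\bm{\nu}^{(t)})\to g(\bm{\nu}^\star)$. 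Strong duality then transfers optimality to the primal iterates. The main obstacle I anticipate is \emph{primal recovery}: because the inner $\mathbf{x}$-subproblem can have multiple maximizers at non-differentiable $\bm{\nu}$, the primal sequence need not converge pointwise even when the dual does. I would resolve this in the standard way, either by Cesàro averaging the primal iterates $\bar{\mathbf{x}}^{(T)}=\tfrac{1}{\sum_t \delta^{(t)}}\sum_{t\le T}\delta^{(t)}\mathbf{x}^{(t)}$, which is known to yield a primal-optimal accumulation point under the above step-size conditions, or by noting that the strict concavity of $x\log(1/x)$ in the $\Xi_j$-block makes the Lagrangian maximizer unique in $\bm{\Xi}$ and unique in $\mathbf{x}$ at generic $\bm{\nu}$, so that feasibility holds in the limit.
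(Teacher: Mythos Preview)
Your high-level strategy---establish strong duality, identify Algorithm~\ref{alg:tldda} as a dual subgradient method with diminishing step size, and invoke standard convergence theory---is the same route the paper takes; the paper simply writes out the classical contradiction argument explicitly rather than citing it, and is in fact less careful than you about the primal-recovery issue.

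There is, however, a genuine gap in your decomposition step. The partial Lagrangian you write down does \emph{not} separate: the term $x_{k_j}\log(c_{k_j}/\Xi_j)$ still couples $\mathbf{x}$ and $\Xi$, so maximizing over $x_{k_j}$ at fixed multipliers would give the rule $j^{*}=\argmax_{j}\big[\log(c_{k_j}/\Xi_j)+\nu_j\big]$, which depends on $\Xi_j$. That is not what Algorithm~\ref{alg:tldda} does; the algorithm's user step is $j^{*}=\argmax_{j}\big[\log c_{k_j}-\lambda_j\big]$, with no $\Xi$-dependence, and its BS step is $\Xi_j=\min\{L_j,e^{\lambda_j-1}\}$. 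Your claim that ``these are precisely the inner updates'' therefore fails as written.

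The missing idea is to use the equality $\Xi_j=\sum_k x_{k_j}$ to rewrite the objective of~(\ref{eqn:mmhn53}) \emph{before} dualizing, as
\begin{equation*}
\sum_{k}\sum_{j} x_{k_j}\log c_{k_j}\;-\;\sum_{j}\Xi_j\log\Xi_j,
\end{equation*}
which is now genuinely separable in $(\mathbf{x},\Xi)$. Dualizing the equality with multiplier $\lambda_j$ then yields exactly the per-user rule, the per-BS closed form, and the residual subgradient $\Xi_j-\sum_k x_{k_j}$ that appear in Algorithm~\ref{alg:tldda}. With this correction in place, the rest of your argument (boundedness of subgradients, step-size conditions $\sum_t\delta^{(t)}=\infty$ and $\sum_t(\delta^{(t)})^2<\infty$, and primal recovery) goes through.
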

\begin{proof}
Let $\mathbf{x}_k^{(t)}$ denote the solution produced by Algorithm~\ref{alg:tldda} at step $t$.
Let $\partial \mathcal{U}(\mathbf{x}_k^{(t)})$ be the subgradient of the objective function in problem~(\ref{eqn:mmhn53}) at step $t$. It can be easily verified that the updated direction in step~\ref{stepGD}  
%$13$ 
of Algorithm~\ref{alg:tldda} is the subgradient direction. Since $\Xi_j$ is upper bounded by $L_j$ and $K$, and $\sum_{k=1}^K x_{k_j}$ is upper bounded by $K$, $\partial \mathcal{U}(\mathbf{x}_k^{(t)})$ is also bounded. 

Denote $\mathcal{U}_a$ as the final result produced by Algorithm \ref{alg:tldda} and $\mathcal{U}^*$ as the optimal solution of problem (\ref{eqn:mmhn53}). We prove the theorem by contradiction. Assume that $\mathcal{U}_a$ is not optimal. Then there must exist an $\epsilon>0$ such that
$\mathcal{U}_a + 2\epsilon < \mathcal{U}^*$. 
%\begin{eqnarray}
%\label{eqn:mmhn49}
%\mathcal{U}_a + 2\epsilon < \mathcal{U}^*.
%\end{eqnarray}
Then there must be a solution $\mathbf{\hat{x}}_k$ so that
\begin{eqnarray}
\label{eqn:mmhn48}
\mathcal{U}_a + 2\epsilon <\mathcal{U}(\mathbf{\hat{x}}_k).
\end{eqnarray}
Let $t_0$ be sufficiently large so that for any $t>t_0$ we have
\begin{eqnarray}
\label{eqn:mmhn47}
\mathcal{U}(\mathbf{x}_k^{(t)}) \leq \mathcal{U}_a + \epsilon.
\end{eqnarray}
Combining~(\ref{eqn:mmhn48}) and~(\ref{eqn:mmhn47}), we have
$\mathcal{U}(\mathbf{x}_k^{(t)}) + \epsilon < \mathcal{U}(\mathbf{x}_k^{(t)})$.
%\begin{eqnarray}
%\label{eqn:mmhn46}
%\mathcal{U}(\mathbf{x}_k^{(t)}) + \epsilon < \mathcal{U}(\mathbf{x}_k^{(t)}).
%\end{eqnarray}

Let $\kappa$ be a positive number that satisfies $\kappa \leq \inf \left\{ \|\partial \mathcal{U}(\mathbf{x}_k^{(t)})\|\right\}$, for all $t$.
It follows that
\begin{align}
\label{eqn:mmhn50}
&\; \| \mathbf{x}_k^{(t+1)} - \mathbf{\hat{x}}_k\|^2 %\\
%=&\; 
= \| \mathbf{x}_k^{(t)} -\delta^{(t)}\partial \mathcal{U}^{(t)}- \mathbf{\hat{x}}_k\|^2 \\ %\nonumber \\
=&\; \| \mathbf{x}_k^{(t)} - \mathbf{\hat{x}}_k\|^2 + (\delta^{(t)})^2 \| \partial \mathcal{U}^{(t)} \|^2 - 
2\delta^{(t)}(\partial \mathcal{U}^{(t)})^H(\mathbf{x}_k^{(t)}-\mathbf{\hat{x}}_k)
\nonumber \\
%&\; 2\delta^{(t)}(\partial \mathcal{U}^{(t)})^H(\mathbf{x}_k^{(t)}-\mathbf{\hat{x}}_k) \nonumber \\
\geq&\; \| \mathbf{x}_k^{(t)} - \mathbf{\hat{x}}_k\|^2 + (\delta^{(t)})^2 \| \partial \mathcal{U}^{(t)} \|^2 -2\delta^{(t)}(\mathcal{U}(\mathbf{x}_k^{(t)})-\mathcal{U}(\mathbf{\hat{x}}_k)) \nonumber \\
\geq&\; \| \mathbf{x}_k^{(t)} - \mathbf{\hat{x}}_k\|^2 + (\delta^{(t)})^2 \kappa^2 +2\delta^{(t)}\epsilon \nonumber \\
\geq&\; \| \mathbf{x}_k^{(t)} - \mathbf{\hat{x}}_k\|^2 +2\delta^{(t)}\epsilon 
%\nonumber \\
%\geq&\; 
\geq \cdots \nonumber \\
\geq&\; \| \mathbf{x}_k^{(t_0)} - \mathbf{\hat{x}}_k\|^2 +2\epsilon \sum_{j=t_0}^t \delta^{(j)}. \nonumber 
\end{align}
Note that the first inequality is due to the property of subgradient. So we finally have $\| \mathbf{x}_k^{(t+1)} - \mathbf{\hat{x}}_k\|^2 \geq \| \mathbf{x}_k^{(t_0)} - \mathbf{\hat{x}}_k\|^2 +2\epsilon \sum_{j=t_0}^t \delta^{(j)}$, which cannot hold for sufficiently large $t$. Thus Algorithm \ref{alg:tldda} optimally solves problem (\ref{eqn:mmhn53}).
\end{proof}

\begin{algorithm} [!t]
\small
\SetAlgoLined
$t=0$, $\mathbf{\lambda}^{(1)}=0$ \;
\While{not converged}
{
$t\leftarrow t+1$ \;
\For {$k=1, \cdots, K$}{
\For {$j=1, \cdots, J$}{
Compute  $c_{k_j}$ as in (\ref{eqn:mmhn80}) \;}
Find $j^*=\argmax_j \left\{\log(c_{k_j}-\lambda_j^{(t)})\right\}$ \;
Let $x_{k_j}^{(t)}=0$ for $j\neq j^*$ \;
\eIf{$\log(c_{k_j}-\lambda_j^{(t)}) \geq 0$}{
	$x_{k_j^*}^{(t)}=1$ \;
}{
	$x_{k_j^*}^{(t)}=0$ \;
}
%Let $x_{k_j^*}^{(t)}=1$, if $\log(c_{k_j}-\lambda_j^{(t)}) \geq 0$ \;
%Otherwise, $x_{k_j^*}^{(t)}=0$ \;
}
\For{$j=1,\cdots,J$}{ \label{stepGD} 
Each BS chooses a step size $\delta^{(t)}$ and computes
$\Xi_j^{(t+1)}=\mbox{min}\{L_j,e^{(\lambda_j^{(t)}-1)}\}$ and 
$\lambda_j^{(t+1)}=\lambda_j^{(t)}-\delta^{(t)}(\Xi_j^{(t)}-\sum_{k=1}^K x_{k_j}^{(t)})$ \; 
% then annouces them to the system;
}
}
\caption{Two Layer Dual Decomposition Algorithm for Optimization Problem~(\ref{eqn:mmhn53})}
\label{alg:tldda}
\end{algorithm}

\begin{theorem} 
The optimal solution to problem~(\ref{eqn:mmhn53}) is also feasible and optimal to problem (\ref{eqn:mmhn55}).
\label{themo}
\end{theorem}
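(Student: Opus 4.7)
The plan is to establish two facts: first, that the optimal solution produced by the relaxation~(\ref{eqn:mmhn53}) is actually integral, hence feasible for the binary problem~(\ref{eqn:mmhn55}); and second, that its objective value matches the optimum of~(\ref{eqn:mmhn55}). The machinery of Lemma~\ref{lem1} together with Facts~\ref{lem2} and~\ref{lem3} is already in place and will do the heavy lifting.

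First I would freeze the auxiliary variables $\Xi_j$ at their optimal values $\Xi_j^\star$. Then the coefficients $\log(c_{k_j}/\Xi_j^\star)$ become fixed constants, and problem~(\ref{eqn:mmhn53}) reduces to a linear program in $\{x_{k_j}\}$ subject to $\sum_k x_{k_j}\leq L_j$ (through $\Xi_j^\star\leq L_j$), $\sum_j x_{k_j}\leq 1$, and $0\leq x_{k_j}\leq 1$. The coefficient matrix of this LP has precisely the block form displayed in~(\ref{eqn:mmhn81}), which by Lemma~\ref{lem1} is totally unimodular. Facts~\ref{lem2} and~\ref{lem3} then guarantee that at least one optimal vertex solution $\{x_{k_j}^\star\}$ is integral, i.e., binary. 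Since $\Xi_j^\star$ is tied to $\{x_{k_j}^\star\}$ through the equality constraint~(\ref{eqn:mmhn54}), we automatically have $\Xi_j^\star=\sum_k x_{k_j}^\star\in\{0,1,\ldots,L_j\}$, so $\Xi_j^\star$ is integer as well, and $\{x_{k_j}^\star\}$ satisfies every constraint of~(\ref{eqn:mmhn55}).

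For the optimality half, I would argue by sandwiching. Any feasible solution of~(\ref{eqn:mmhn55}) extends to a feasible solution of~(\ref{eqn:mmhn53}) by setting $\Xi_j=\sum_k x_{k_j}$, and on binary $\{x_{k_j}\}$ the two objectives agree term by term since $\log(c_{k_j}/\Xi_j)=\log(c_{k_j}/\sum_k x_{k_j})$. Hence the optimal value of the relaxation~(\ref{eqn:mmhn53}) is an upper bound on that of~(\ref{eqn:mmhn55}). Combined with the integrality argument above, this upper bound is attained by a feasible binary solution of~(\ref{eqn:mmhn55}), so that solution must be optimal for~(\ref{eqn:mmhn55}).

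The main obstacle will be the subtle interplay between $\{x_{k_j}\}$ and the coupling variable $\Xi_j$: a priori, fixing $\Xi_j$ at a non-integer value could destroy the totally unimodular structure, since $\Xi_j$ appears on the right-hand side of the implicit constraint $\sum_k x_{k_j}\leq \Xi_j$. The clean resolution is to observe that in~(\ref{eqn:mmhn53}), $\Xi_j$ is coupled to $x_{k_j}$ only through the \emph{equality}~(\ref{eqn:mmhn54}), so at optimum we may eliminate $\Xi_j$ entirely and recover the same constraint skeleton as problem~\textbf{P1-3} with right-hand sides $L_j$. This is exactly the viewpoint adopted by the inner iteration of Algorithm~\ref{alg:tldda}, which confirms that the dual-decomposition procedure never needs to consider fractional $\{x_{k_j}\}$ at its inner LP, and therefore terminates at an integral solution that is simultaneously feasible and optimal for~(\ref{eqn:mmhn55}).
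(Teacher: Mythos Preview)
Your sandwiching argument in the second block is sound and matches the paper's reasoning: since~(\ref{eqn:mmhn53}) is a relaxation of~(\ref{eqn:mmhn55}), any integral optimizer of~(\ref{eqn:mmhn53}) is automatically optimal for~(\ref{eqn:mmhn55}). The issue is how you establish integrality.

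Your primary route---freeze $\Xi_j$ at its optimal value $\Xi_j^\star$ and invoke total unimodularity on the resulting linear program in $\{x_{k_j}\}$---has a gap you yourself flag but do not actually close. Total unimodularity guarantees integral vertices only when the right-hand sides are integral; here the equality constraint~(\ref{eqn:mmhn54}) forces $\sum_k x_{k_j}=\Xi_j^\star$, and you have no independent reason to believe $\Xi_j^\star$ is an integer before you know $\{x_{k_j}^\star\}$ is binary. The argument is circular. Your proposed escape---eliminate $\Xi_j$ and ``recover the same constraint skeleton as problem~\textbf{P1-3}''---does recover the polyhedron, but the objective is then $\sum_{k,j} x_{k_j}\log\bigl(c_{k_j}/\sum_k x_{k_j}\bigr)$, which is nonlinear, so Lemma~\ref{lem1} and Facts~\ref{lem2}--\ref{lem3} (all stated for \emph{linear} programs) no longer apply.

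What actually rescues your proof is the final sentence, where you appeal to the structure of Algorithm~\ref{alg:tldda}: by inspection, its inner step assigns each $x_{k_j}^{(t)}\in\{0,1\}$ explicitly, so the algorithm never produces a fractional iterate. This is precisely the paper's argument, stated in one line: Algorithm~\ref{alg:tldda} returns integer $\{x_{k_j}\}$ by construction, hence its output is feasible for~(\ref{eqn:mmhn55}) and, by the relaxation bound, optimal there. Your TU detour is neither needed nor, as written, complete; the clean proof is just the algorithmic observation plus the sandwich.
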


\begin{proof}
From problem~(\ref{eqn:mmhn53}) to~(\ref{eqn:mmhn55}), we relax the variables from binary to real and introduce an equality constraint. The equality constraint does not change the problem. 
%is just another representation of the problem and changes nothing. 
So the optimal value to problem~(\ref{eqn:mmhn55}) provides an upper bound to that of problem~(\ref{eqn:mmhn53}). 
However, it can be observed from Algorithm~\ref{alg:tldda} that the solutions to problem~(\ref{eqn:mmhn55}) are integers rather than fractions. 
So the solutions are also feasible to problem~(\ref{eqn:mmhn53}).   
%So we could set the solutions to problem (\ref{eqn:mmhn53}) exactly the same as the solutions to problem (\ref{eqn:mmhn55}). 
Since the solutions to problem~(\ref{eqn:mmhn53}) cannot result in a higher optimal value than the solutions to problem~(\ref{eqn:mmhn55}), the solutions to problem~(\ref{eqn:mmhn55}) are exactly the solutions to problem~(\ref{eqn:mmhn53}) as well. Henceforth, even though we transform problem~(\ref{eqn:mmhn53}) to problem~(\ref{eqn:mmhn55}), the optimal solution is not affected by the transformation.
\end{proof}

To sum up, the optimal solution to problem~(\ref{eqn:mmhn55}) can be solved with Algorithm~\ref{alg:tldda}. For comparison purpose, we also propose two greedy algorithms as benchmarks, which are presented in Algorithms~\ref{alg:gafmm3} and~\ref{alg:gafmm4}. The main idea of the greedy algorithms is to first identify the most desirable user-BS pair, and then to allocate all the resource to that user. This is repeated until convergence is reached.

\begin{algorithm} [!t]%[H]
\small
\SetAlgoLined
	Initialize $\mathcal{K}=\left\{1,2,\cdots,K\right\}$ and $\mathcal{J}=\left\{1,2,\cdots,J\right\}$ $x_{k_j}$ to be an all-zero matrix \;
	\For{$k=1$ to $K$}{
		\For{$j=1$ to $J$}{
			Compute $c_{k_j}$ as in (\ref{eqn:mmhn80}) \;
			}
		}
	\While{$\max_{k,j} \log(c_{k,j})>0$}{
			Find $(k^*,j^*)=\argmax_{k,j} \log(c_{k,j})$ \;
			$x_{k^{*}_{j^*}} = 1$ \;
			$\mathcal{K}=\mathcal{K} \backslash k^*$ \;
			$\mathcal{J}=\mathcal{J} \backslash J^*$ \;
			}
\caption{Greedy Algorithm 4 for Joint Resource Allocation and User Association}
\label{alg:gafmm3}
\end{algorithm}

\begin{algorithm} [!t]%[H]
\small
\SetAlgoLined
	Initialize $\mathcal{K}=\left\{1,2,\cdots,K\right\}$, $\mathcal{J}=\left\{1,2,\cdots,J\right\}$ and $x_{k_j}$ to be an all-zero matrix \;
	\For{$k=1$ to $K$}{
		\For{$j=1$ to $J$}{
			Compute $c_{k_j}$ as in (\ref{eqn:mmhn80}) \;
			}
		}
	\For{$j=1$ to $J$}{
			\If{$\max_{k} \log(c_{k,j}) > 0$}{
				Find $(k^*,j)=\argmax_{k} c_{k,j}$ \;
				$x_{k^{*}_{j}} = 1$ \;
				$\mathcal{K}=\mathcal{K} \backslash k^*$ \;
			}
			}
\caption{Greedy Algorithm 5 for Joint Resource Allocation and User Association}
\label{alg:gafmm4}
\end{algorithm}

%-----------------------------------------------------------
\section{Distributed User Association} \label{sec:dist}

In the previous section, we assume a central controller that has global information and assigns users to the BS's. In this section, we consider distributed user association.
% where there is no central controller. 
We still assume that the BS's have all the CSI via uplink training. We further assume that all the BS's, including the massive MIMO microcell BS and the small cell BS's, belong to the same service provider. Each user makes its own decision based on the broadcast and local information.
Throughout this section, we do not allow fractional connection. We omit constraint~(\ref{eqn:mmhn95}) in the problem formulation, which is, however, enforced when solving the problem.

We model the behavior and interactions among the service provider and users using repeated game theory. The first key problem is to determine whether the game will converge. The second key problem is to analyze whether both sides are satisfactory about the outcome of the game, i.e., existence of the Nash Equilibrium.

%-----------------------------------------------------------
%\subsection{Service Provider's Strategy for Pricing} \label{subsec:spdtp}
%\subsection{User Association Game Formulation} \label{subsec:game}
\subsection{Service Provider Sets the Price} \label{subsec:spdtp}
%-----------------------------------------------------------

The players of the repeated game include the service provider and the users. 
%In the first subsection, we consider the case that during 
During each round of the game, the service provider determines the price of the connection service. The users decide whether or not to connect, and if to connect, to which BS. 
The strategy of the service provide is to set the price $p_{k_j}$ of each BS $j$ for each user $k$, while the strategy of each user $k$ is to set $x_{k_j}$ to either $0$ or $1$ for $j \in \mathcal{J}$.

%%%%%%%-----------------------------------------------------------
%%%%%%\subsubsection{User Discrimination} \label{subsubsec:distwpd}
%%%%%%%-----------------------------------------------------------

The utility of the service provider is defined as $\mathcal{U}_B=\sum_{k=1}^K \sum_{j=1}^J x_{k_j} p_{k_j}$. Since each BS is constrained by its maximum load capacity $L_j$, 
the service provider aims to solve the following problem. 
\begin{align}
\max_{\{p_{k_j}\}} &\;\; \mathcal{U}_B=\sum_{k=1}^K \sum_{j=1}^J x_{k_j} p_{k_j} \label{eqn:mmhn84} \\
\mbox{s.t.} &\;\; \sum_k x_{k_j} \leq L_j, \; j=1, 2, \cdots, J. \nonumber
\end{align}

The utility of each user is the data rate achieved minus its payment.
So each user aims to solve the following problem.
\begin{align}
\max_{\{x_{k_j}\}} &\;\; \mathcal{U}_k=\max \left\{ \omega_k \log\left(\sum_{j=1}^J x_{k_j}c_{k_j}\right) 
- \sum_{j=1}^J x_{k_j}p_{j_k}, 0 \right\} \label{eqn:mmhn83} \\
\mbox{s.t.} &\;\; \sum_j x_{k_j} \leq 1, \nonumber
\end{align}
where  
%used to trade-off the two objectives for user $k$. 
the logarithmic function represents the satisfaction level of a user $k$ towards its achievable rate, and $\omega_k$ is a weight used to tradeoff rate satisfaction and monetary payment. 
%$\omega_k$ is a weight factor used for conversion between the rate satisfaction and monetary measurement. 
We assume that the weight $\omega_k$ of each user is drawn from a finite set $\mathcal{W}$ with $\left| \mathcal{W} \right|$ elements. This assumption is true in real-world practice. For instance, $\$30$ for a wireless service with $60$ Mbps data rate is considered to be cheap; $\$45$ is considered to be reasonable; $\$60$ would be 
%thought as 
acceptable; $\$80$ would be expensive for most people; $\$100$ would be too expensive; and $\$150$ or above would not be an option for most people. So the weight of the users has generally finite choices of values based on common sense, and is typically in a range $=(0,W_{M})$, where $W_M$ is the maximum possible value for $\omega_k$. 

The repeated game is played as follows.
Initially, the service provider sets a price for each BS for each user and broadcasts the prices to the users. Knowing the prices, the users will feedback the service provider of their choices based on their own calculations. Then the service provider updates the prices and broadcasts them to the users. Users again inform the service provider of their choices, and so forth. The process is repeated until both the service provider and users are all satisfied with the price.

%-----------------------------------------------------------
%\subsection{Service Provider's Strategy for Pricing} \label{subsec:spdtp}
%\subsection{Nash Equilibrium and Convergence} \label{subsec:spdtp}
%-----------------------------------------------------------

Given the players, their strategies and utilities, we have the following definition for the NE of the user association game.  
%define the NE of the game as follows. 

\begin{definition} 
A strategy set $\left\{p^*_{k_j}, x^*_{k_j}\right\}$, for all $k$, $j$, is an NE of the repeated game if $\mathcal{U}_B(p^*_{k_j}, x^*_{k_j}) \geq \mathcal{U}_B(p_{k_j}, x^*_{k_j})$, for all $p_{k_j}$ and $\mathcal{U}_k(p^*_{k_j}, x^*_{k_j}) \geq \mathcal{U}_k(p^*_{k_j}, x_{k_j})$, for all $k$, $x_{k_j}$.
\end{definition}

Due to the constraint that each user can only connect to one BS, $\omega_k \log\left(\sum_{j=1}^J x_{k_j}c_{k_j}\right)=\sum_{j=1}^J x_{k_j} \omega_k \log(c_{k_j})$. 
%So each user locally optimize the following problem.
Therefore the objective function of problem~(\ref{eqn:mmhn83}) becomes
\begin{align}
%\max_{\{x_{k_j}\}} &\;\; 
\mathcal{U}_k=\max \left\{ \sum_{j=1}^J x_{k_j} \omega_k \log(c_{k_j}) - \sum_{j=1}^J x_{k_j}p_{j_k},0 \right\}. \label{eqn:mmhn83n} %\\
%\mbox{s.t.} &\;\; \sum_j x_{k_j} \leq 1. \nonumber
\end{align}
%
%Note that the constraint of 
For the reformulated problem~(\ref{eqn:mmhn83n}), the constraint $\sum_j x_{k_j} \leq 1$ indicates that 
%$\sum_j x_{k_j}$ could be $0$, i.e., 
a user may choose not to connect to any of the BS's. On the other hand, if we restrict $\sum_j x_{k_j}=1$, 
%i.e., a user must connect to a base station, 
then even if the service provider sets the prices to infinity, each user will still connect to a BS, which is clearly unreasonable.

Given the utility function~(\ref{eqn:mmhn83n}) and the constraint in~(\ref{eqn:mmhn83}), the optimal solution for each user can be derived as
\begin{align}
	& j^* = \argmax_{j \in \mathcal{J}} \left[\omega_k \log (c_{k_j}) - p_{j_k}\right] \\
	& x_{k_j} = \begin{cases}
    		1, & \text{if $j=j^*$ and } \omega_k \log (c_{k_j^*}) \geq p_{j^*_k} \\
    		0, & \text{otherwise}.
  				\end{cases}
\label{eqn:mmhn70}
\end{align}

Such users' decision can be interpreted this way. 
A user will choose the best connection based on its own evaluation.
If its evaluation of the connection is greater than or equal to the price, it will connect to this BS. Otherwise, the user will not connect to the BS. So we readily have the following result.

\begin{lemma} \label{lm:lm5}
The highest profit the service provider can obtain from a user $k$ towards BS $j$, is the user's evaluation.
\end{lemma}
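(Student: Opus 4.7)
The plan is to derive the claim directly from the user's best response characterized in~(\ref{eqn:mmhn70}) by a simple case analysis on the price $p_{j_k}$ relative to the evaluation $\omega_k \log(c_{k_j})$. Since the service provider's revenue from the $(k,j)$ pair is $x_{k_j} p_{j_k}$, and $x_{k_j}$ is chosen by user $k$ according to the rule stated above, I only need to show that any $p_{j_k}$ strictly above $\omega_k \log(c_{k_j})$ yields zero revenue, while setting $p_{j_k} = \omega_k \log(c_{k_j})$ can still induce $x_{k_j}=1$, which realizes exactly the user's evaluation as revenue.

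Concretely, first I would fix user $k$ and a candidate BS $j$, and assume (without loss of generality, for the upper bound argument) that $j$ is the maximizer in the user's decision, i.e., $j=j^*$ as defined in~(\ref{eqn:mmhn70}); otherwise $x_{k_j}=0$ and the revenue from this pair is trivially zero. Under this assumption, the user connects only when $\omega_k \log(c_{k_j}) \geq p_{j_k}$. Therefore any price $p_{j_k} > \omega_k \log(c_{k_j})$ forces $x_{k_j}=0$ and gives revenue $x_{k_j} p_{j_k}=0$, whereas any price $p_{j_k} \leq \omega_k \log(c_{k_j})$ gives revenue at most $\omega_k \log(c_{k_j})$, with equality attained at $p_{j_k}=\omega_k \log(c_{k_j})$ (at which the tie-breaking rule in~(\ref{eqn:mmhn70}) still permits $x_{k_j}=1$). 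Combining the two cases yields the upper bound $x_{k_j} p_{j_k} \leq \omega_k \log(c_{k_j})$, with equality achievable, which is the statement of the lemma.

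There is essentially no hard step here; the only subtlety worth addressing explicitly is the boundary case $p_{j_k} = \omega_k \log(c_{k_j})$, where the user is indifferent. I would note that the decision rule in~(\ref{eqn:mmhn70}) uses a weak inequality, so indifference is resolved in favor of connecting, which is exactly what makes the supremum of achievable revenue attained rather than merely approached. I would also remark that this per-pair bound is tight regardless of the prices set by the service provider for other BS's (as long as BS $j$ remains the user's $\argmax$), and this observation will be what later drives the convergence analysis of the repeated game toward an NE in which prices equal evaluations.
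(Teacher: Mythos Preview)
Your proposal is correct and mirrors the paper's own reasoning: the paper states the lemma as an immediate consequence of the user's decision rule~(\ref{eqn:mmhn70}), noting that a user connects only when its evaluation meets or exceeds the price, from which the result ``readily'' follows. Your case analysis on $p_{j_k}$ versus $\omega_k\log(c_{k_j})$ simply makes this explicit, including the boundary tie-breaking, so there is no substantive difference in approach.
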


The service provider aims to solve problem~(\ref{eqn:mmhn84}) by tuning variables $p_{k_j}$, $k=1,2,\cdots,K$, $j=1,2,\cdots,J$. However, the constraint $\sum_k x_{k_j} \leq L_j$ is implicitly coupled with all the 
%implicitly contains variable 
$p_{k_j}$'s, since according to the user's choice, $j^*=\argmax_{j \in \mathcal{J}} \left[\omega_k \log (c_{k_j}) - p_{j_k}\right]$. The service provider problem is actually with the following form. 
\begin{align}
\max_{\{p_{k_j}\}} &\;\; \mathcal{U}_B=\sum_{k=1}^K \sum_{j=1}^J x_{k_j} p_{k_j} \label{eqn:mmhn66} \\
\mbox{s.t.} &\;\; \sum_k x_{k_{j(p_{k_j})}} \leq L_j, \; j=1, 2, \cdots, J. \nonumber
\end{align}

%Thus, the optimization problem for the service provider is in fact (\ref{eqn:mmhn66}), which is mixed integer programming in nature and generally NP-hard to solve. 

Since problem~(\ref{eqn:mmhn66}) has coupling constraints, one may try to introduce Lagrange multipliers to the constraint and solve the resulting problem using dual decomposition. However, since $p_{k_j}$ is implicitly contained in the constraint, the gradient and subgradient are difficult to find. 
%So we need to resort to other ways. 
Next, we propose Algorithm~\ref{alg:disafsp1} for the service provider, and then prove that the algorithm achieves optimal utility for the service provider and the users.

\begin{theorem}
\label{prop10}
If the service provider adopts Algorithm \ref{alg:disafsp1}, the game converges and the NE can be achieved.
\end{theorem}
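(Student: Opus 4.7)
The plan is to build the proof on two pillars: (i) the user side has a closed-form best response already derived in the paragraph ending at equation~(\ref{eqn:mmhn70}), so only the service provider's pricing dynamics need to be analyzed; (ii) the assumption that each $\omega_k$ is drawn from the finite set $\mathcal{W}$ makes the effective strategy space of the service provider finite as well, so convergence can be argued by a monotonicity-plus-finiteness argument rather than a continuous fixed-point argument.

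First I would fix the users' strategies via~(\ref{eqn:mmhn70}), which reduces the game to a single-player optimization for the service provider with the implicit coupling shown in~(\ref{eqn:mmhn66}). By Lemma~\ref{lm:lm5}, for any candidate assignment of user $k$ to BS $j$ the maximum extractable price is exactly $\omega_k \log(c_{k_j})$, so without loss of optimality the service provider only needs to consider prices in the finite set $\{\omega \log(c_{k_j}) : \omega \in \mathcal{W}\}$ for each $(k,j)$ pair (anything larger loses the customer, anything strictly smaller leaves money on the table). This collapses the seemingly continuous strategy space into a finite one, which is the key observation that makes convergence tractable.

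Next I would analyze Algorithm~\ref{alg:disafsp1} as a best-response-style update: at each round the provider raises the price on any BS whose tentative load exceeds $L_j$ (using the finite price grid above) and leaves prices otherwise unchanged. I would then show (a) the service provider's utility $\mathcal{U}_B$ is monotone non-decreasing along the iterations, because every price adjustment either removes an infeasibility or strictly increases revenue on at least one user-BS pair while preserving feasibility elsewhere, and (b) $\mathcal{U}_B$ is upper bounded by $\sum_{k,j} \omega_k \log(c_{k_j})$ restricted to the capacity constraints. Combining monotonicity, boundedness, and the finite strategy space yields termination in finitely many rounds.

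Finally, I would verify the Nash-equilibrium property at the terminal state $\{p^*_{k_j}, x^*_{k_j}\}$. The user-side NE condition is automatic because $x^*_{k_j}$ is chosen through~(\ref{eqn:mmhn70}) given $p^*_{k_j}$. For the provider-side condition I would argue by contradiction: any alternative price vector $p_{k_j}$ that strictly improves $\mathcal{U}_B$ while keeping every capacity constraint satisfied would have been selected by Algorithm~\ref{alg:disafsp1} before termination, since the algorithm exhausts all feasible upward moves on the finite grid. The main obstacle I anticipate is precisely this last step: establishing that the algorithm's local update rule is rich enough to rule out all globally profitable deviations, in particular that no ``swap'' of two users between two BSs can simultaneously respect capacities and raise $\mathcal{U}_B$ at the termination state. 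I would address this by appealing to Lemma~\ref{lm:lm5} (each user already pays her full evaluation to her chosen BS) together with the capacity constraints, which together imply that any swap either violates feasibility or is revenue-neutral, completing the NE verification.
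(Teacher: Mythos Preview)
Your proposal rests on a mischaracterization of Algorithm~\ref{alg:disafsp1}. You describe the iteration as ``at each round the provider raises the price on any BS whose tentative load exceeds $L_j$ \ldots\ and leaves prices otherwise unchanged.'' That is not what the algorithm does. The loop never inspects BS loads; instead it performs a \emph{bisection on each user's private weight} $\omega_k$. At round $t$ the provider posts the probe prices $p_{k_j}(t)=\max\{\hat{\omega}_k(t)\log(c_{k_j}),0\}$ with $\hat{\omega}_k(t)$ the midpoint of the current interval $[\omega_k^l(t),\omega_k^u(t)]$, observes the user's response $|F_k|$, and halves the interval accordingly. Because $\omega_k\in\mathcal{W}$ with $|\mathcal{W}|$ finite, this bisection pins down every $\omega_k$ in $\mathcal{O}(\log_2|\mathcal{W}|)$ rounds. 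Only \emph{after} the weights are learned does the provider form $v_{k_j}=c_{k_j}\omega_k$, solve the assignment problem~(\ref{eqn:mmhn68}) (structurally identical to \textbf{P1-2}, hence solvable via the total-unimodularity argument of Lemma~\ref{lem1}), and post the terminal prices~(\ref{eqn:mmhn67}). The convergence claim therefore follows from termination of binary search, not from any best-response dynamics.

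Your monotonicity pillar also fails on its own terms: the paper explicitly notes (immediately after this theorem and again in the simulations) that the provider's utility is \emph{not} monotone during the game because the load constraints can be violated in intermediate rounds. So step~(a) of your plan is false for this algorithm, and the ``monotone non-decreasing $\mathcal{U}_B$ plus finite grid'' argument cannot be repaired without changing the algorithm you are supposed to analyze. For the NE verification, the paper's route is also simpler than your swap-exclusion argument: once the provider knows every $\omega_k$, Lemma~\ref{lm:lm5} says the maximum extractable revenue under the capacity constraints is exactly the optimal value of~(\ref{eqn:mmhn68}); the prices in~(\ref{eqn:mmhn67}) attain it, and every user's utility is driven to $0$, so neither side can deviate profitably.
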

\begin{proof}
We first notice that the service provider has priority over the users. The users always make decisions based upon the service provider's price setting. Basically, the service provider controls when the repeated game terminates. 

In Algorithm~\ref{alg:disafsp1}, the service provider tests out the weight of each user using binary search with $\mathcal{O}(\log_2(\left| \mathcal{W} \right|))$ steps.
Once the service provider obtains $\omega_k$, $k=1,2,\cdots,K$, it then estimates the users' price evaluation matrix $\mathbf{V}$ as follows.
\begin{equation}
	v_{k_j} = c_{k_j}  \omega_k,
\label{eqn:mmhn69}
\end{equation}
where $v_{k_j}$ is the entry of matrix $\mathbf{V}$ at row $j$ and column $k$. 
%Then the service provider is ready to select users for the base stations by solving the following problem.
%Given the above fact, 
Following Lemma~\ref{lm:lm5}, the service provider can obtain its optimal price strategy by first selecting users for each BS and solving the following problem. 
\begin{align}
\max_{x_{k_j}} &\;\; \sum_{k=1}^K\sum_{j=1}^J {x_{k_j}v_{k_j}} \label{eqn:mmhn68} \\
\mbox{s.t.} &\;\; \sum_k x_{k_j} \leq L_j,~j=1,2,\cdots,J \nonumber \\
            &\;\; \sum_j x_{k_j} \leq 1,~k=1,2,\cdots,K \nonumber \\
            &\;\; \mbox{Constraints (\ref{eqn:mmhn95}),~(\ref{eqn:mmhn69})}. \nonumber
\end{align}

The optimal solution $x^*_{k_j}$ to the above problem can be solved in a similar way as solving problem~\textbf{P1-2}.
Then the optimal prices for the service provider can be obtained as follows.
\begin{equation}
  p^*_{k_j}=\begin{cases}
    v_{k_j}, & \text{if $x^*_{k_j}=1$};\\
    v_{k_j}+\epsilon, & \text{otherwise},
  \end{cases}
\label{eqn:mmhn67}
\end{equation}
where $\epsilon$ is an arbitrary positive number.

Therefore, by adopting Algorithm \ref{alg:disafsp1}, the optimal utility (highest) can be reached for the service provider. Meanwhile, we could see that all the users' utility must be $0$ due to the optimal price setting (i.e., each user's rate satisfaction matches its monetary payment). That means, all the users achieve the optimal utility given the price setting as well. Therefore, the game converges to the NE. 
%and the NE is met.
\end{proof}

\begin{algorithm} [!t]%[H]
\small
\SetAlgoLined
	Initialize $\omega_{MAX}$, $\omega_{MIN}$, $t=0$ \;
	\For{$k=1$ to $K$}{
		\For{$j=1$ to $J$}{
			Compute $c_{k_j}$ as in (\ref{eqn:mmhn80}) \;
			}
		}
		
	\For{$k=1$ to $K$}{
		$\omega_k^u(t) = \omega_{MAX}$ \;
		$\omega_k^l(t) = \omega_{MIN}$ \;
	}
	\While{not converged}{
	\For{$k=1$ to $K$}{
		$\hat{w}_{k}(t) = \frac{1}{2} (\omega_k^u(t)+\omega_k^l(t))$ \;
		\For{$j=1$ to $J$}{
			$p_{k_j}(t) = \max\left\{\hat{w}_{k}(t) \log(c_{k_j}),0\right\}$ \;
		}
	}
	$t \leftarrow t + 1$ \;
	\For{$k=1$ to $K$}{
	\uIf{$\left|F_k\right|>1$}{
		$\omega_k^u(t) = \omega_k^u(t-1)$ \;
		$\omega_k^l(t) = \omega_k^l(t-1)$ \;
	}
	\uElseIf{$\left|F_k\right|=1$}{
		$\omega_k^u(t) = \omega_k^u(t-1)$ \;
		$\omega_k^l(t) = \hat{\omega}_{k}(t)$ \;
	}
	\Else{
		$\omega_k^u(t) = \hat{\omega}_{k}(t)$ \;
		$\omega_k^l(t) = \omega_k^l(t-1)$ \;
	}
	}
	}
	
	\For{$k=1$ to $K$}{
		\For{$j=1$ to $J$}{
		Calculate $v_{k_j}$ as in (\ref{eqn:mmhn69}) using $\hat{w}_{k}$ \;
		}
	}
	Solve (\ref{eqn:mmhn68}) and find optimal price as in (\ref{eqn:mmhn67}) \;
\caption{Algorithm for Service Provider}
\label{alg:disafsp1}
\end{algorithm}

\smallskip
Note that it is 
%highly 
possible that the optimal utility of the service provider will be lower than the maximum utility during the game, because the load capacity constraint may be violated due to the distributed operation.

%%%%%%-----------------------------------------------------------
%%%%%\subsubsection{No User Discrimination} \label{subsubsec:distnopd}
%%%%%%-----------------------------------------------------------
%%%%%
%%%%%The service provider sets price of each base station. No user discrimination is allowed.
%%%%%In this way, users pay uniform price $p_j$ if them choose the same base station $j$.
%%%%%Users' strategy would be similar to that in (\ref{eqn:mmhn64}).
%%%%%
%%%%%\begin{eqnarray}
%%%%%j^* &=& \arg \max_{j \in \mathcal{J}} \left[w_k \log (c_{k_j}) - p_{j}\right], \\
%%%%%x_{k_j}&=&\begin{cases}
    %%%%%1, & \text{if $j=j^*$ and } w_k \log (c_{k_j^*}) \geq p_{j^*}.\\
    %%%%%0, & \text{otherwise} .
  %%%%%\end{cases}
%%%%%\label{eqn:mmhn64}
%%%%%\end{eqnarray}
%%%%%
%%%%%For the service provide, it could firstly use the binary search to test out the weight $w_k$ of each user as well. However, the remaining problem is as follows.
%%%%%
%%%%%\begin{eqnarray}
%%%%%\max_{p_{k_j}} && \mathcal{U}_B=\sum_{k=1}^K \sum_{j=1}^J x_{k_j} p_{j} \label{eqn:mmhn65} \\
%%%%%\mbox{s.t.} && \sum_k x_{k_{j(p_{j})}} \leq L_j \nonumber
%%%%%\end{eqnarray}

%-----------------------------------------------------------
%\subsection{Users' Stratege for Bidding} \label{subsec:ub}
\subsection{A User Bidding based Approach} \label{subsec:ub}
%-----------------------------------------------------------
%\reminder{not bid to a BS with satisfaction less than $0$}

We next consider a bidding approach to the problem. 
%It operates as follows. 
Before service starts, users bid to the service provider according to their predicted satisfaction towards each BS. And service provider determines whether or not to accept a user's bid and feedback the decisions to users. Then the users make another round of bids according to its predicted satisfaction and the service provider's decision history. The service provider again decides whether or not to accept a user's bid and feedback the decision, and so forth.

%Here we consider the 
Assume date-intensive users that
%users to be data-hungry and 
strive for as high data rate as possible.
%Considering that a user will not bid to a BS with satisfaction less than $0$, each 
Each user solves the following problem. 
\begin{align}
\max_{\{p_{k_j}\}} &\;\; \mathcal{U}_k=\max \left\{ \sum_{j=1}^J x_{k_j} \omega_k \log(c_{k_j}),0 \right\} \label{eqn:mmhn40} \\
\mbox{s.t.} &\;\; \sum_j x_{k_j(p_{k_j})} \leq 1. \nonumber
\end{align}
On the other hand, the service provide aims to maximize its utility, i.e., the total payment made by all the users. 
\begin{align}
\max_{\{x_{k_j}\}} &\;\; \mathcal{U}_B=\sum_{k=1}^K \sum_{j=1}^J x_{k_j} p_{k_j} \label{eqn:mmhn38} \\
\mbox{s.t.} &\;\; \sum_k x_{k_{j}} \leq L_j, \; j=1,2,\cdots,J. \nonumber
\end{align}
Note that the decision variables in these two problems are different from those in problems~(\ref{eqn:mmhn84}) and~(\ref{eqn:mmhn83}), respectively.  

%To sum up, the players of the game include the service provider and the users as well. The strategy of the user is to make payment $0$ or the satisfactory level to a BS. The strategy of the service provider is to determine whether or not accept a user's bid. The utility of each user is given in (\ref{eqn:mmhn40}) and the utility of the service provider is given in (\ref{eqn:mmhn38}).

We assume the general case that $K \geq \sum_{j=1}^J L_j$ (i.e., not all the users can be served). 
%The optimal solution for each user is simple. 
In order to achieve the greatest level of satisfaction, each user makes the highest possible payment. So the optimal solution for each user is
\begin{eqnarray}
	p_{k_j}=\max \left\{ \sum_{j=1}^J x_{k_j} \omega_k \log(c_{k_j}),0 \right\}. \label{eqn:mmhn39} 
\end{eqnarray}
The optimal strategy for the service provider is summarized in Algorithm~\ref{alg:distub}.

\begin{algorithm} [!t]%[H]
\small
\SetAlgoLined
	\While{not converged}{
	\For{$j=1$ to $J$}{
		\uIf{BS $j$ is bidden by $\leq L_j$ users}{
			Keep all the users in BS $j$'s waiting list \;
		}\Else{
			Keep the top $L_j$ users with the highest bids and reject the other users \;			
		}
	}
	}
\caption{Algorithm for the Service Provider with the Bidding Approach}
\label{alg:distub}
\end{algorithm}

During the first stage of the game, each user offers a price to its most desirable BS. Algorithm~\ref{alg:distub} is used to check if each BS $j$ receives more than $L_j$ bids. The service provider only puts $L_j$ top users on BS $j$'s waiting list based on the offered prices; and rejects all other users. If BS $j$ receives no more than $L_j$ bids, all these users will be put on BS $j$'s waiting list.

At the second stage, if a user is in a BS's waiting list, 
%in the previous round, 
it will keep on bidding the same BS with the same price to guarantee the highest utility. However, if a user gets rejected in the previous round, as being selfish, it will exclude the BS's that have rejected it and offers a price to its most desirable BS among the remaining ones. For the service provider, it adopts the same strategy. If the number of bids received for a BS outnumbers the load capacity of that BS, the service provider only keeps the $L_j$ most desirable users on the waiting list and rejects the others. It keeps all users on the waiting list if the number of offers received is less than a BS's load capacity. This two stages repeat until convergence is achieved.

\begin{lemma}
\label{lem10}
The sequence of bids made by a user is non-increasing in the user's preference list.
\end{lemma}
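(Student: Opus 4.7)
The plan is to chase the definitions and show that the bid values are exactly the positions in the user's preference list, which then makes monotonicity immediate.

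First I would unpack what a user bids. From the bidding approach of Section~\ref{subsec:ub}, each user who chooses to connect to BS $j$ sets its offered price according to~(\ref{eqn:mmhn39}), which, whenever the user actually bids (so that $x_{k_j}=1$ for the selected $j$), simplifies to $p_{k_j}=\omega_k\log(c_{k_j})$. Thus the magnitude of the bid that user $k$ would place on BS $j$ is determined solely by the quantity $\omega_k\log(c_{k_j})$.

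Second, I would identify the user's preference list. The user solves~(\ref{eqn:mmhn40}) and, just as in the argmax derivation leading to~(\ref{eqn:mmhn70}), always picks the available BS with the largest value of $\omega_k\log(c_{k_j})$. Hence the preference list is precisely the sorting of BSs by $\omega_k\log(c_{k_j})$ in decreasing order, say $j_1,j_2,\ldots$ with $\omega_k\log(c_{k_{j_1}})\geq\omega_k\log(c_{k_{j_2}})\geq\cdots$.

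Third, I would describe how the sequence of bids is generated over the rounds of Algorithm~\ref{alg:distub}. On the first round user $k$ bids BS $j_1$ at price $\omega_k\log(c_{k_{j_1}})$. If accepted, the discussion following Algorithm~\ref{alg:distub} says the user continues bidding the same BS at the same price, so the bid value is unchanged. If rejected, the user excludes $j_1$ and reapplies the same argmax rule over the remaining BSs, which by the ordering above picks $j_2$ with bid value $\omega_k\log(c_{k_{j_2}})\leq\omega_k\log(c_{k_{j_1}})$. Iterating this argument (a simple induction on the round index) shows that the $t$-th bid is either equal to the $(t-1)$-th bid, or else is made on the next entry in the preference list, whose associated value is no larger by construction. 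Concatenating these inequalities yields the non-increasing sequence claimed in the lemma.

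I do not foresee a real obstacle: the result is essentially a bookkeeping consequence of the greedy, selfish behavior built into the bidding protocol. The only subtlety worth flagging explicitly is that bids of magnitude $0$ (the $\max\{\cdot,0\}$ clause in~(\ref{eqn:mmhn39})) are still consistent with the ordering, since a user who stops obtaining a positive utility simply drops out, and the tail of zeros preserves monotonicity.
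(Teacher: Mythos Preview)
Your proposal is correct and follows essentially the same line as the paper's own proof: both argue that the user's greedy, utility-maximizing behavior forces it to bid on BSs in decreasing order of $\omega_k\log(c_{k_j})$, moving down the list only upon rejection. Your version is in fact more explicit than the paper's---you unpack the bid value from~(\ref{eqn:mmhn39}) and formalize the induction on rounds, whereas the paper simply describes the protocol narrative (bid highest, if rejected bid second highest, etc., including the case of being bumped off a waiting list at a later stage)---but the underlying argument is the same.
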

\begin{proof}
Before a user makes an offer, it computes the satisfaction of all the BS's to obtain a preference list.
Since a user aims to maximize its utility, it first proposes to the BS with the highest satisfaction. If it is rejected by the BS, it will propose to the BS with the second highest satisfaction, and so forth. Note that even if a user may be on the waiting list of a BS, it may be removed from that waiting list at a later stage. If that happens, this user will start bidding to other BS. 
A user will repeat this procedure until it is finally in a BS's serving list or rejected by all BS's. 
%So the sequence of bidding made by a user is non-increasing in the user's preference list.
This concludes the proof. 
\end{proof}

\begin{lemma}
\label{lem9}
The sequence of bids a BS put on the waiting list is non-decreasing in its preference list.
\end{lemma}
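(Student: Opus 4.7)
The plan is to track, for a fixed BS $j$, the multiset of bids on its waiting list across rounds and show that whenever a new bid enters the waiting list it cannot be smaller than any bid it displaces. Concretely, index rounds by $t=1,2,\ldots$ and let $\mathcal{W}_j^{(t)}$ denote the waiting list of BS $j$ after the service provider runs Algorithm~\ref{alg:distub} in round $t$, with associated bids $\{p_{k_j}: k \in \mathcal{W}_j^{(t)}\}$. Define the ``acceptance sequence'' as the list of bids in the order they first enter $\mathcal{W}_j^{(t)}$ across $t$; the claim is that this sequence is non-decreasing.

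The proof would proceed by induction on $t$. For the base case, when $t=1$, only one round of sorting has taken place, and there is nothing to compare with. For the inductive step, fix round $t+1$ and a user $k$ whose bid enters $\mathcal{W}_j^{(t+1)}\setminus \mathcal{W}_j^{(t)}$. By Lemma~\ref{lem10}, $k$'s current bid $p_{k_j}$ to BS $j$ is at most $k$'s prior bids to BSs that rejected $k$, and crucially $k$ was not on $\mathcal{W}_j^{(t)}$ (otherwise $k$ would have kept bidding at the same price per the two-stage rule: ``if a user is in a BS's waiting list, it will keep on bidding the same BS with the same price''). Therefore $k$'s entry into the list at round $t+1$ is an addition, and the service provider's rule in Algorithm~\ref{alg:distub} only admits $k$ if $k$'s bid is among the top $L_j$ of the combined pool of carried-over bidders and new bidders. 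Hence $p_{k_j} \geq \min_{k'\in \mathcal{W}_j^{(t)}} p_{k'_j}$; otherwise $k$ would have been rejected in favor of the user holding that minimum.

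From this it follows that any bid newly placed on the waiting list is at least as large as every bid it could possibly displace, so the minimum bid on $\mathcal{W}_j^{(t)}$ is non-decreasing in $t$, and consequently each successive entry in the acceptance sequence is no smaller than any earlier entry. Equivalently, reading the waiting list in the order in which bids were accepted, the bids form a non-decreasing sequence, which is exactly what Lemma~\ref{lem9} asserts.

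The main obstacle I anticipate is pinning down the bookkeeping of who is on the waiting list at what time, because users can reappear (be removed then bid again elsewhere, then later be displaced) and the statement ``non-decreasing in its preference list'' must be interpreted carefully. The cleanest way is to observe that a BS's preference list over users is induced by bid amounts and to work exclusively with the sequence of newly admitted bids, as sketched above; together with Lemma~\ref{lem10} (which ensures no user ever re-bids upward to the same BS), this keeps the induction tight without needing to track the global dynamics of all users.
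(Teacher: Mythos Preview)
Your formalization of the lemma as ``the acceptance sequence---the bids in the order they first enter the waiting list---is non-decreasing'' is stronger than what the paper intends, and it is in fact false. Take a BS with load $L_j=2$: in round~$1$ only user $A$ bids with price~$5$ and is waitlisted; in round~$2$ user $B$ bids with price~$3$ (while $A$ re-bids~$5$), so both are waitlisted. The acceptance sequence is $5,3$, which decreases. Correspondingly, your key inequality $p_{k_j}\ge \min_{k'\in\mathcal{W}_j^{(t)}}p_{k'_j}$ holds only when $|\mathcal{W}_j^{(t)}|=L_j$; when the list has slack, a newcomer is admitted without any lower bound on its bid, so neither the minimum nor your acceptance sequence need be monotone. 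The further step from ``minimum is non-decreasing'' to ``each successive entry is no smaller than any earlier entry'' also does not follow.

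What the paper actually argues (and what Theorem~\ref{thm4} uses) is the weaker set-wise monotonicity: because every user on BS~$j$'s waiting list at round~$t$ re-bids to~$j$ at the same price in round~$t+1$, the pool from which BS~$j$ selects at $t+1$ is a superset of the pool at~$t$. Hence the top-$L_j$ selection can only weakly improve---for each rank $i=1,\ldots,L_j$ the $i$-th highest bid on the list is non-decreasing in~$t$, and in particular once the list is full the rejection threshold never falls. That is the intended reading of ``non-decreasing in its preference list,'' and it follows in one line from the nested choice sets; no induction on newly admitted users is needed, and Lemma~\ref{lem10} plays no role here. If you want to retain your framework, restrict the monotonicity claim to the rank-wise (or threshold) sense and drop the acceptance-sequence formulation.
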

\begin{proof}
Given the fact any BS has a finite load capacity and $K \geq \sum_{j=1}^J L_j$, all the base station will have at least one user bidding to it at some stage of the game.
Since a BS aims to maximize its utility, it puts all the users who make an offer on the waiting list. On the condition that there are too many users, it will reject the users who it will never served. In the next round of game, the BS will often have more or at least the same amount of bids compared to its current waiting list. This means that the BS has more choices. The BS again only keeps the most profitable ones and reject or remove the others from the waiting list. So the sequence of bids a base station put on the list is non-decreasing in its preference list.
\end{proof}

\begin{theorem}
\label{thm4}
The repeated bidding game converges.
\end{theorem}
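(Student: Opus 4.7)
The plan is to exploit the monotonicity properties established in Lemmas~\ref{lem10} and~\ref{lem9} to rule out cycles and then invoke finiteness of the state space. At any round, the joint state is specified by which BS each user is currently bidding to together with each BS's current waiting list; since there are only $K$ users, $J$ BSs, and each user has at most $J$ possible targets to bid on, the state space is finite, and convergence reduces to showing that the dynamics admit no cycles.

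First I would introduce a potential function on the users' side,
\begin{equation}
\Phi(t) = \sum_{k=1}^K \mathrm{rank}_k\!\left(j_k(t)\right),
\end{equation}
where $j_k(t)$ is the BS user $k$ is bidding to at round $t$ (with a sentinel value $J+1$ assigned to any user that has been rejected by every BS), and $\mathrm{rank}_k(\cdot)$ is the position in user $k$'s preference list (higher rank $=$ less preferred). By Lemma~\ref{lem10}, whenever a user changes its bid it must move to a strictly less preferred BS, so $\Phi$ is non-decreasing across rounds and strictly increases at any round in which at least one user switches its target. Because $\Phi$ is bounded above by $K(J+1)$, only finitely many such rounds can occur, so after some round $T$ no user changes its bid.

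Next I would argue that once the users stop changing their bids, the service provider's state also stabilizes. By the update rule in Algorithm~\ref{alg:distub} and Lemma~\ref{lem9}, each BS's waiting list can only improve (in its preference order) when a new, higher bid arrives; if from round $T$ onward the multiset of incoming bids is unchanged, the waiting lists are determined as the top-$L_j$ of a fixed set and therefore remain fixed. Thus after a finite number of rounds neither side updates, which is precisely the definition of convergence.

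The main obstacle is the coupling introduced by the remark in the proof of Lemma~\ref{lem10} that a user on a waiting list may later be displaced by higher bids. One has to check that this displacement does not cause $\Phi$ to decrease: when a user is removed from a waiting list it is forced to bid strictly further down its preference list (by Lemma~\ref{lem10}), so its contribution to $\Phi$ strictly increases, preserving monotonicity. A secondary detail is handling users who exhaust their preference list and become permanently rejected; assigning them the sentinel rank $J+1$ freezes their contribution and keeps the potential bounded, so they do not obstruct termination.
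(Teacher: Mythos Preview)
Your argument is correct and is, in fact, the standard termination proof for deferred--acceptance--style dynamics: a monotone, bounded potential on the users' side forces the bid profile to stabilize, after which the BS waiting lists are determined and fixed. The paper, however, does not argue this way. It proceeds by contradiction, positing that non-convergence would produce a ``blocking pair'' $(k,j)$---a user $k$ who prefers BS $j$ to its current assignment while $j$ simultaneously prefers $k$ to someone on its list---and then uses Lemmas~\ref{lem10} and~\ref{lem9} to derive a contradiction about who must have bid where. In effect the paper folds the stability argument into the convergence proof. Your potential-function route is more direct and cleanly separates termination from optimality/stability (which the paper handles in Theorem~\ref{thm3}); the paper's route has the advantage that the same blocking-pair machinery is reused immediately for Theorem~\ref{thm3}, so the two theorems share a single idea. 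Both approaches rely on the same monotonicity lemmas; yours gives an explicit bound of at most $K(J+1)$ rounds, which the paper's contradiction argument does not surface.
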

\begin{proof}
Based on~Lemmas~\ref{lem10} and~\ref{lem9}, we prove this theorem by contradiction. Suppose that this repeated game does converge. Then there must be a stage of the game that (i) there is a user $k$ and BS $j$ pair so that user $k$ is connected to another BS $j'$ or is not connected to any BS; (ii) user $k$ prefers BS $j$ to BS $j'$ or prefers 
%BS $j$ 
to be not connected; and (iii) BS $j$ prefers user $k$ to a user $k'$ who is on its serving list. 

Consider the case where user $k$ is served by BS $j'$. Since the sequence of bids made by a BS 
%put on the list 
is non-decreasing, 
%in its base station, 
it must be the case that user $k$ has never bidden to BS $j$ during the game. Otherwise, if user $k$ has bidden to BS $j$, BS $j$ would not have ended up with choosing $k'$ over $k$. 
%So user $k$ must have never bidden to BS $j$. 
In this case, user $k$ would never have bidden to BS $j'$ either, since user $k$ prefers $j$ to $j'$ and the bids (see Lemma~\ref{lem10}). However, user $k$ is now served by BS $j'$, user $k$ must have bidden to BS $j'$, which contradicts that user $k$ would never have bidden to BS $j'$.

The same reasoning holds for the case when user $k$ is not connected to any BS. If BS $j$ prefers $k$ to $k'$ on the serving list, BS $j$ would never reject user $k$ while keeping user $k'$.

%Therefore, the repeated game must converge.
Therefore, the game converges when every user is either on a waiting list or has been rejected by every BS, and the game will converge. 
\end{proof}

From the proof, we can actually see that the game terminates when the least popular BS becomes fully loaded.
\begin{theorem}
\label{thm3}
The outcome of the 
%above 
repeated bidding game is optimal for both the users and service provider. 
\end{theorem}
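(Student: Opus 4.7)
The plan is to show that the terminal allocation of the bidding procedure satisfies the Nash-equilibrium conditions for both the user problem~(\ref{eqn:mmhn40}) and the service-provider problem~(\ref{eqn:mmhn38}): neither a user nor the service provider can strictly raise its utility by a unilateral change of strategy. The argument rests on the monotonicity results already established (Lemmas~\ref{lem10} and~\ref{lem9}) together with the convergence of the game (Theorem~\ref{thm4}).

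First I would fix an arbitrary user $k$ and examine its status at termination. If $k$ sits on BS $j$'s waiting list, Lemma~\ref{lem10} says that $k$ has already bid to every strictly more-preferred BS $j'$ (those with $\omega_k\log(c_{k_{j'}})>\omega_k\log(c_{k_j})$) and been removed from each. Re-submitting a bid to any such $j'$ cannot displace an incumbent without exceeding the user's own valuation, which would violate the bidding rule~(\ref{eqn:mmhn39}); reducing the bid at $j$ risks eviction and a worse match; and bidding to a less-preferred BS is dominated by the current assignment. If $k$ is unmatched at termination, Lemmas~\ref{lem10} and~\ref{lem9} jointly imply that every BS $j$ already carries a full waiting list whose members' bids to $j$ exceed $\omega_k\log(c_{k_j})$, so no admissible bid by $k$ can break in. Either way, $\mathcal{U}_k$ is at its unilateral maximum.

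For the service-provider side, Lemma~\ref{lem9} gives that each BS $j$'s terminal waiting list is precisely the $L_j$ largest bids ever submitted to $j$. Any alternative admission rule would swap a higher bid for a lower one and strictly reduce $\sum_k x_{k_j}p_{k_j}$; summing over $j$ shows that $\mathcal{U}_B$ is maximized given the bid histories. Because users bid their valuations---the largest bids compatible with their own optimization~(\ref{eqn:mmhn39})---no alternative provider strategy can extract more revenue against a best-responding user population.

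Combining the two parts yields the Nash-equilibrium property, and Theorem~\ref{thm4} guarantees that this state is actually reached in finitely many rounds. The main obstacle I anticipate is the unmatched-user case, which forces the two monotonicity lemmas to be used together rather than separately: showing that no BS retains latent spare capacity absorbable by $k$ requires both that $k$ has worked its way past $j$ on its preference list \emph{and} that $j$'s waiting list was progressively upgraded to bids strictly above $\omega_k\log(c_{k_j})$ along the way. The remaining steps are routine bookkeeping.
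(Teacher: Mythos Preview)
Your argument is correct but proceeds along a different line from the paper's. You verify the Nash-equilibrium conditions directly: for each user you use Lemma~\ref{lem10} (and implicitly Lemma~\ref{lem9}) to rule out any profitable re-bid, and for the provider you use Lemma~\ref{lem9} to show that each BS already retains the $L_j$ highest offers it ever received. The paper instead runs a stable-matching style \emph{exchange/contradiction} argument: it supposes user $k$ ends at BS $j$ while strictly preferring some $j'$, then argues that forcibly inserting $k$ into $j'$'s list would displace one of $j'$'s final occupants, who would immediately re-bid and knock $k$ back out---so the game could not have terminated in the assumed configuration. Optimality for the provider is then dispatched in one sentence as ``obvious.''

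Your route is arguably cleaner on two counts: you treat the unmatched-user case explicitly (the paper's contradiction sketch tacitly assumes $k$ is matched somewhere), and you spell out the provider side rather than declaring it obvious. The paper's exchange argument, on the other hand, is closer to the Gale--Shapley tradition and makes the ``no blocking pair'' structure more visible. Both rely on the same two monotonicity lemmas, so the underlying machinery is identical; only the packaging differs.
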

\begin{proof}
Suppose that the outcome of the game is not optimal for a user $k$, who is connected to BS $j$.
Then there must be another BS $j'$, which has higher ranking than BS $j$ in the preference list of user $k$ and has a serving list of users $\left\{ j'_1,j'_2,\cdots,j'_{L_{j'}} \right\}$. Since BS $j'$ serves these users, 
%$j'_1,j'_2,\cdots,j'_{L_{j'}}$, 
it means that BS $j'$ prefers them to user $k$ and BS $j'$ is at the top of the preference lists of these users. If at some stage, user $k$ is in the waiting list of BS $j$ (or it is inserted by force), 
%insert user $k$ by force to the waiting list of BS $j'$, 
the game must have not terminated. 

Since user $k$ is in the waiting list, then one of the final users $j'_1,j'_2,\cdots,j'_{L_{j'}}$ must be currently off the list, say user $j'_{L_{j'}}$. 
%is out of the list.
Then user $j'_{L_{j'}}$ will immediately bid for BS $j'$, since BS $j'$ is at the top of its preference list among all the remaining BS's. And BS $j'$ will remove user $k$ from its waiting list, since user $k$ has a lowest ranking in the preference list of BS $j'$. 
Thus when the repeated game terminates, the outcomes are optimal for each user. It is obvious that the outcome is also optimal for the service provider as well.
\end{proof}

\smallskip
%%%Another way to express Theorem \ref{thm3} is that users get the best they can get. However, this may not be true for the base stations. 
From Theorems~\ref{thm4} and~\ref{thm3}, we conclude that the game converges to the NE when the game terminates.

%%%\reminder{Consider bias to encourage offloading?}
%%%
%%%If $K = \sum^J_{j=1}{L_j}$, it is the \textit{Stable Marriage Matching} problem.
%%%If $K > \sum^J_{j=1}{L_j}$, it is the \textit{College Admission Matching} problem \cite{Gale1962}.

%-----------------------------------------------------------
\section{Simulation Study} \label{sec:sim}
%-----------------------------------------------------------

%In this section, we conduct numeric simulations to verify the efficacy of our proposed schemes. 
We validate the proposed user association schemes with simulations. 
Throughout the simulations, we assume $l_{j,k}=1/(1+(\frac{d_{j,k}}{40})^{3.5})$ for the path loss between a user and the massive MIMO BS, and $l_{j,k}=1/(1+(\frac{d_{j,k}}{40})^{4})$ for the path loss between a user and a small cell BS~\cite{Bethanabhotla}.
We assume that the power of small scale fading follows a uniform distribution from $\left[0.8,1\right]$. 
%For geometry of the BS's, we 
We fix the location of the massive MIMO BS at the center of the cell. The other BS's are randomly placed across in the cell. Users are randomly placed in the area. 
%dropped across the map. 
The other parameter settings are listed in Table~\ref{tab:sc}. The error bars in the plots are 95\% confidence intervals.

\begin{table} [!t]
\begin{center}
\caption{System Configuration}
\label{tab:sc}
\begin{tabular}{l|l||l|l}
\toprule
\textbf{Parameter} & \textbf{Value} & \textbf{Parameter} & \textbf{Value}\\
\midrule
$M_{massive}$ & $100$ & $M$ & $4$ \\
$L_{massive}$ & $10$ & $L$ & $4$ \\
$P_{massive}$ & $40$ dBm &P& $40$ dBm\\
Area & $1000\times 1000$ m$^2$ & $J$ & $11$\\
\bottomrule
\end{tabular}
\end{center}
\vspace{-0.1in}
\end{table}

%Fig. \ref{fig:centralu} 
Table~\ref{tab:centralu} presents a comparison of rate maximization with the optimal solution and the two proposed greedy algorithms. 
%Fig. \ref{fig:centralulog} 
Tabel~\ref{tab:centralulog} shows a comparison of rate maximization considering proportional fairness with the optimal solution and the two proposed greedy algorithms.
We can see from both tables
%figures 
that the optimal solution achieves the highest network utility. We also notice that as the number of users increases, the gaps between the optimal utility and the greedy solutions become more and more narrower. This is because that as there are more users, the user diversity effect becomes stronger. So the greedy algorithms and the optimal user association algorithm tend to produce similar solutions.

%​K = [50 100 150 200 250];
%Optimal Rate Maximization = [382.9266  483.4131  543.0618  572.2600  593.9714]
%Greedy Algorithm 1 =    [363.5394  480.0396  540.4285  571.0771  592.5843]
%Greedy Algorithm 2 =    [191.0061  279.5609  340.6012  371.4990  392.5354]

\begin{table} [!t]
\begin{center}
\caption{Rate Maximization of Centralized Control}
\label{tab:centralu}
\begin{tabular}{r|lllll}
\toprule
\textbf{K} & 50 & 100 & 150 & 200 & 250 \\
\midrule
Optimal Rate Maximation & 382.9 & 483.4 & 543.1 & 572.3 & 594.0 \\
Greedy Algorithm 1  & 363.5 & 480.0 & 540.4 & 571.1 & 592.6 \\
Greedy Algorithm 2  & 191.0 & 279.6 & 340.6 & 371.5 & 392.5 \\
\bottomrule
\end{tabular}
\end{center}
\vspace{-0.1in}
\end{table}

%Figure 3
%K = [50 100 150 200 250];
%Optimal Log Rate Maximization = [128.6505  155.5117  167.7080  172.6998  175.9847]
%Greedy Algorithm 1 = [115.0337  153.3264  166.8653  172.3864  175.6504]
%Greedy Algorithm 2 =            [67.5480   97.6558  121.9851  133.4823  139.6626]

\begin{table} [!t]
\begin{center}
\caption{Log Rate Utility of Centralized Control}
\label{tab:centralulog}
\begin{tabular}{r|lllll}
\toprule
\textbf{K} & 50 & 100 & 150 & 200 & 250 \\
\midrule
Optimal Log Rate Max. & 128.7 & 155.5 & 167.7 & 172.7 & 176.0 \\
Greedy Algorithm 1  & 115.0 & 153.3 & 166.9 & 172.4 & 175.7 \\
Greedy Algorithm 2  & 67.5  & 97.7  & 122.0 & 133.5 & 140.0 \\
\bottomrule
\end{tabular}
\end{center}
\vspace{-0.1in}
\end{table}

%-------------------------------------------------------
\begin{comment}

\begin{figure} [!t] %[thb]
\center{\includegraphics[width=3.2in, height=2.2in]{centraluti_ineq_bar.eps}}
\caption{Rate Maximization of Centralized Control.}
\label{fig:centralu}
%\vspace{-0.15in}
\end{figure}

\begin{figure} [!t] %[thb]
\center{\includegraphics[width=3.2in, height=2.2in]{centraluti_log_ineq_bar.eps}}
\caption{Log Rate Utility of Centralized Control.}
\label{fig:centralulog}
%\vspace{-0.15in}
\end{figure}

\end{comment}
%-------------------------------------------------------

\begin{figure} [!t] %[thb]
\center{\includegraphics[width=3.3in, height=2.3in]{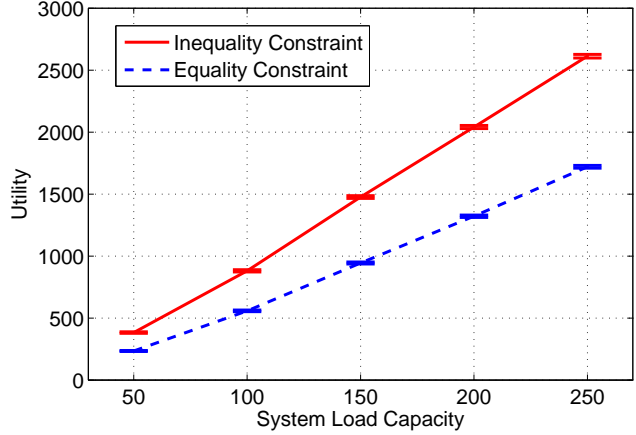}}
\caption{A comparison of the centralized algorithms with logarithmic rate utility under inequality and equality constraints.}
\label{fig:centralineqeqcmp}
%\vspace{-0.15in}
\end{figure}

Throughout this paper, the constraint for each user is $\sum_{j=1}^J x_{k_j} \leq 1$. It should provide upper bounds for the problem with the constraint $\sum_{j=1}^J x_{k_j}=1$.
A comparison of these two different constraints is presented in Fig.~\ref{fig:centralineqeqcmp}. For a fair comparison, we have exactly the same number of active users as the load capacity for all the BS's. For instance, when the system load capacity is $250$, we have $J=51$ BS's and $K=250$.
We can see that the inequality constraint problem indeed upper bounds the equality constraint problem. This is because the inequality constraint problem could eliminate the users whose rate is too low with a negative utility.

%Figure 5
%K = [50 100 150 200 250];
%Optimal Joint Resource Allocation and User Association=[42.2983   50.8491   55.7862   56.0959   61.9529]
%Greedy Algorithm 4 = [35.3293   37.9407   39.3203   39.8653   40.2636]
%Greedy Algorithm 5 = [35.0625   37.8518   39.3019   39.8222   40.2365]

\begin{table} [!t]
\begin{center}
\caption{Joint Resource Allocation and User Association}
\label{tab:jraua}
\begin{tabular}{r|lllll}
\toprule
\textbf{K} & 50 & 100 & 150 & 200 & 250 \\
\midrule
Optimal Joint Resource Allocation  & 42.3 & 50.8 & 55.8 & 56.1 & 62.0 \\
and User Association & & & & & \\
Greedy Algorithm 4  & 35.3 & 37.9 & 39.3 & 39.9 & 40.3 \\
Greedy Algorithm 5  & 35.1 & 37.9 & 39.3 & 39.8 & 40.2 \\
\bottomrule
\end{tabular}
\end{center}
\vspace{-0.1in}
\end{table}

%-------------------------------------------------------
\begin{comment}

\begin{figure} [!t] %[thb]
\center{\includegraphics[width=3.3in, height=2.3in]{joint_bar.eps}}
\caption{Joint resource allocation and user association.}
\label{fig:jraua}
%\vspace{-0.15in}
\end{figure}

\end{comment}
%-------------------------------------------------------

%Fig. \ref{fig:jraua} 
Table~\ref{tab:jraua} presents a comparison of the optimal joint resource allocation and user association algorithm and the two proposed greedy algorithms. We find that the optimal scheme achieves the highest utility. Moreover, the gap between the optimal scheme and the greedy schemes is quite large. We also consider the equality constraint problem as a benchmark for the comparison.
%, which is shown in Fig. \ref{fig:userasso1} and Fig. \ref{fig:userasso2}. 
For a fair comparison, we set the sum capacity of this system equal to the number of users. So there are totally $K=50$ active users in the system. 
%Fig. \ref{fig:userasso1} illustrates the optimal solution of problem (\ref{eqn:mmhn63eq}), with optimal network utility as $-59.8462$. Fig. \ref{fig:userasso2} illustrates the optimal solution of problem (\ref{eqn:mmhn63}), with optimal network utility as $29.5433$, which is much higher than $-59.8462$. We can also see that if we connect every user, some edge user with low rate will be harmful for the network utility.
The optimal solution of problem~{\bf P3-2}
%~(\ref{eqn:mmhn63eq}) 
achieves a network utility of $-59.8462$, while the optimal solution of problem~(\ref{eqn:mmhn63}) has a network utility of $29.5433$. We also found that if we connect every user, some edge users will be harmful for the network utility. 

%-----------------------------------------------------
\begin{comment}

\begin{figure} [!t] %[thb]
%\center{\includegraphics[width=\columnwidth]{userasso_equ.eps}}
\center{\includegraphics[width=3.2in, height=2.5in]{userasso_equ.eps}}
\caption{Optimal joint resource allocation and user association with equality constraint.}
\label{fig:userasso1}
%\vspace{-0.15in}
\end{figure}

\begin{figure} [!t] %[thb]  
%\center{\includegraphics[width=\columnwidth]{userasso_inequ.eps}}
\center{\includegraphics[width=3.2in, height=2.5in]{userasso_inequ.eps}}
\caption{Optimal joint resource allocation and user association with inequality constraint.}
\label{fig:userasso2}
%\vspace{-0.15in}
\end{figure}

\end{comment}
%-----------------------------------------------------

Fig.~\ref{fig:dist_bsp} shows the utility of the service provider and all users when the service provider sets the price (as in Section~\ref{subsec:spdtp}). It can be seen that the repeated game converges after $8$ rounds. Furthermore, the utility of all users is monotonically decreasing. That is because once a user's evaluation is known to the service provider, the service provider will set prices for the highest profit, which results in $0$ utility for that user. As discussed, the utility for the service provider is not monotonically increasing, since during the game, the load capacity constraint may be violated. Fig.~\ref{fig:dist_bsp_comp} plots the utilities of the service provider and users versus the number of users. We can see that as the number of user increases, utility of the service provider also increases. This is mainly due to the effect of multi-user diversity. We can also observe that the game terminates after about $8$ rounds no matter how many users are active.

\begin{figure} [!t] %[thb]  
\center{\includegraphics[width=3.3in, height=2.3in]{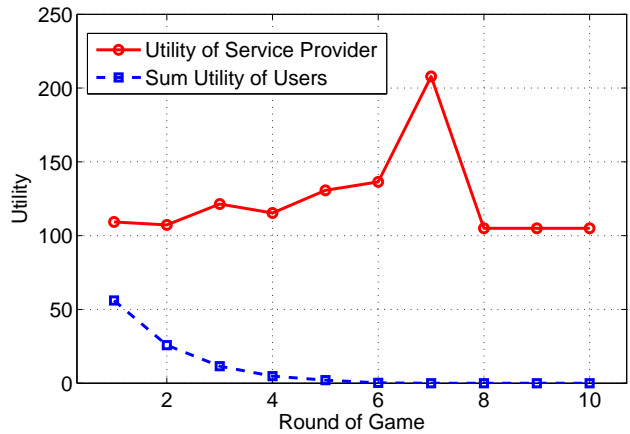}}
\caption{Convergence of the repeated game when the service provider sets the price and $K=100$.}
\label{fig:dist_bsp}
%\vspace{-0.15in}
\end{figure}

\begin{figure} [!t] %[thb]  
\center{\includegraphics[width=3.3in, height=2.3in]{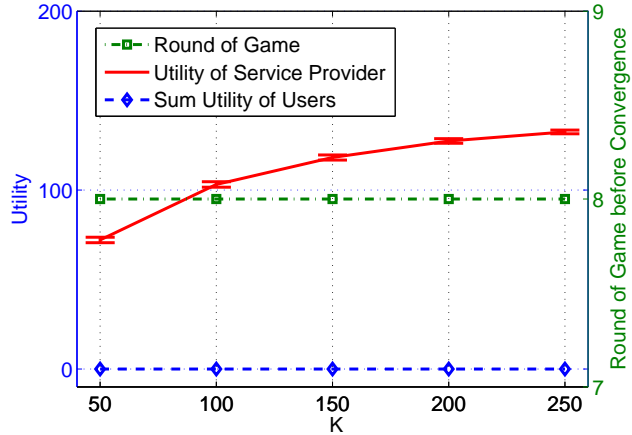}}
\caption{Utility of the service provider, utility of the users, and the number of rounds for convergence for systems with various numbers of users.}
\label{fig:dist_bsp_comp}
\vspace{-0.1in}
\end{figure}

Fig.~\ref{fig:conv} depicts the process of the game when users bid for BS's (as in Section~\ref{subsec:ub}).
Here we deploy $J=41$ BS's. The massive MIMO BS has $M=400$ 
%$M_{massive}=400$ 
antennas. There are $K=350$ users. The left-hand-side $y$-axis represents the load of the $41$ BS's. The right-hand-side $y$-axis represents the utility of the service provider.
We find the game converges in about $10$ rounds, and the utility of the service provider is monotonically increasing as the game continues.

\begin{figure} [!t] %[thb]  
\center{\includegraphics[width=3.3in, height=2.3in]{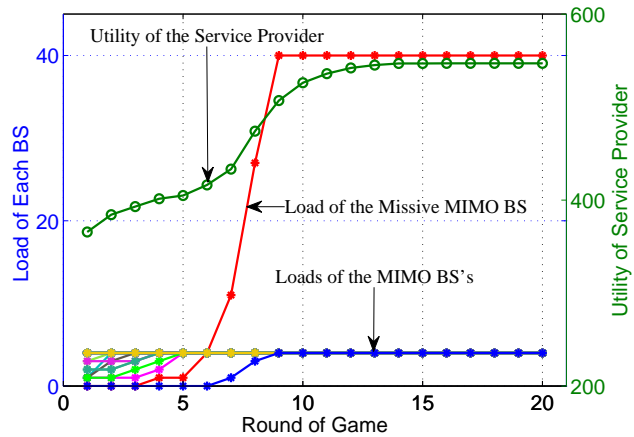}}
\caption{Convergence of the repeated game with respect to BS load when users bid.}
\label{fig:conv}
%\vspace{-0.15in}
\end{figure}

To encourage offloading from the macro BS, we consider rate bias for the BS's in this experiment. Specifically, we multiple the rate of the massive MIMO BS with a factor of $0.5$ to encourage connection to the PBS's. Fig.~\ref{fig:bsut} shows the result when configuration shown in 
%set the parameters according to 
Table~\ref{tab:sc}. It can be observed that the utility with rate bias is higher than the utility without considering rate bias. This result demonstrates the efficacy of rate bias and offloading.
%We could also see that the games terminate less than $8$ rounds of game.
It can be seen that both games terminate in less than $8$ rounds.

\begin{figure} [!t] %[thb]
\center{\includegraphics[width=3.3in, height=2.3in]{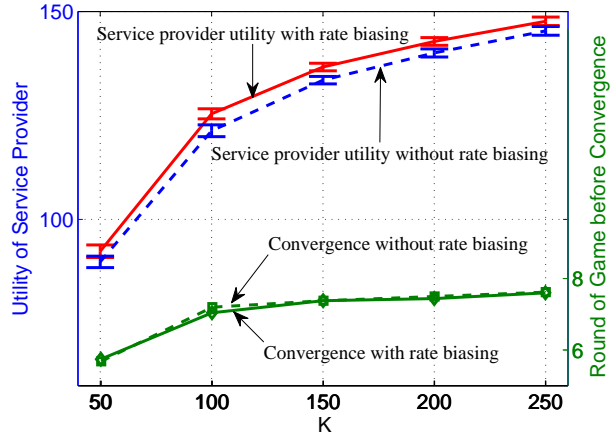}}
\caption{Utility of the service provider with or without rate bias, and convergence of the games under different numbers of users.}
\label{fig:bsut}
\vspace{-0.1in}
\end{figure}

%-----------------------------------------------------------
\section{Conclusions} \label{sec:con}
%-----------------------------------------------------------

In this paper, we investigated the user association problem in a massive MIMO HetNet from the centralized and distributed perspectives. Particularly, by leveraging totally unimodularity we developed optimal algorithms for rate maximization and rate maximization with proportional fairness problems. We also developed optimal algorithms to the joint resource allocation and user association problem with primal decomposition and dual decomposition. Modeling the behavior and interaction of the service provider and users with repeated games, we developed effective distributed algorithms with proven convergence to the NE. 
%proved that the games when the service provider sets the price or users bid for connection would converge to the Nash Equilibrium. 
%We have compared the proposed schemes to some heuristic schemes. 
Simulation results verify the efficacy of the proposed schemes.

\section*{Acknowledgment}

This work is supported in part by the US National Science Foundation (NSF) under Grants CNS-1247955 and CNS-1320664. Any opinions, findings, and conclusions or recommendations expressed in this material are those of the author(s) and do not necessarily reflect the views of the NSF.

\end{document}